\documentclass[aps,pra,twocolumn,groupedaddress]{revtex4-2}
\usepackage[dvipsnames]{xcolor}
\usepackage{hyperref}
\hypersetup{colorlinks=true, linkcolor=myblue, citecolor=ForestGreen, urlcolor=DarkOrchid}
\usepackage[a4paper,left=2.5cm,right=2cm,top=2.5cm,bottom=3cm,headheight=2cm]{geometry} 
\usepackage{graphicx}
\usepackage{amsmath}
\usepackage{amsthm}
\usepackage{amssymb}
\usepackage{bbm}
\usepackage{enumitem}
\usepackage{tcolorbox}
\tcbuselibrary{listings,theorems}
\usepackage{algorithm}
\usepackage{varwidth}
\usepackage[noEnd=true,indLines=false]{algpseudocodex}
\makeatletter
\@addtoreset{ALG@line}{algorithm}
\makeatother
\usepackage[nolist]{acronym}
\usepackage{csquotes}
\usepackage[normalem]{ulem}
\usepackage{cancel}
\usepackage[]{todonotes}
\usepackage{booktabs} 
\usepackage{pgfplotstable} 
\usepackage{dsfont}
\allowdisplaybreaks[1]
\usepackage{mathtools}
\mathtoolsset{showonlyrefs=true} \usepackage{physics}
\usepackage{pgfplots}
\pgfplotsset{compat=1.3}
\usepgfplotslibrary{statistics}
\usepgfplotslibrary{fillbetween}
\usepgfplotslibrary{groupplots}
\usepackage{ragged2e}
\usepackage{float}
\usepackage{tikz}
\usetikzlibrary{
    calc, 
    decorations,
    positioning,
    arrows.meta,
    intersections, 
    backgrounds,
    shadows.blur,
    fit,
    decorations.pathreplacing,
    shapes,
    matrix,
    math,
    quantikz2,
    graphs,
    external
}

 \newcommand{\ii}{\mathrm{i}}

\newcommand{\ee}{\mathrm{e}}

\makeatletter
\DeclareRobustCommand{\rvdots}{\vbox{
    \baselineskip4\p@\lineskiplimit\z@
    \kern-\p@
    \hbox{.}\hbox{.}\hbox{.}
  }}
\makeatother

\makeatletter
    \tikzset{external/shell escape={-shell-escape\space-output-directory=build}
    }
\makeatother

\tikzexternalize[prefix=build/externalized/, only named,
] 
\tikzexternaldisable
\newcommand{\externalize}[2]{
  \tikzsetnextfilename{#1}{#2}\tikzexternaldisable
}

\newtheorem{theorem}{Theorem}
\newtheorem{lemma}[theorem]{Lemma}
\newtheorem{prop}[theorem]{Proposition}
\newtheorem{cor}[theorem]{Corollary}
\newtheorem{rem}{Remark}

 \definecolor{sbblue}{rgb}{0.2823529411764706, 0.47058823529411764, 0.8156862745098039}
\definecolor{sborange}{rgb}{0.9333333333333333, 0.5215686274509804, 0.2901960784313726}
\definecolor{sbgreen}{rgb}{0.41568627450980394, 0.8, 0.39215686274509803}
\definecolor{sbred}{rgb}{0.8392156862745098, 0.37254901960784315, 0.37254901960784315}
\definecolor{sbpurple}{rgb}{0.5843137254901961, 0.4235294117647059, 0.7058823529411765}
\definecolor{sbbrown}{rgb}{0.5490196078431373, 0.3803921568627451, 0.23529411764705882}
\definecolor{sbmagenta}{rgb}{0.8627450980392157, 0.49411764705882355, 0.7529411764705882}
\definecolor{sbgray}{rgb}{0.4745098039215686, 0.4745098039215686, 0.4745098039215686}
\definecolor{sbocca}{rgb}{0.8352941176470589, 0.7333333333333333, 0.403921568627451}
\definecolor{sblightblue}{rgb}{0.5098039215686274, 0.7764705882352941, 0.8862745098039215}
\definecolor{sbbackground}{RGB}{234,234,242}

\definecolor{lightgray}{rgb}{0.7,0.7,0.7}
\definecolor{darkgreen}{rgb}{0,0.39,0}
\definecolor{darkblue}{rgb}{0,0,.6}
\definecolor{darkred}{rgb}{.6,0,0}

\definecolor{myblue}{RGB}{2,61,107}
 
\begin{document}
\title{Measurement-driven Quantum Approximate Optimization
}
\author{Tobias Stollenwerk}
\affiliation{Institute for Quantum Computing Analytics (PGI-12), J\"ulich Research Centre, Wilhelm-Johnen-Stra\ss e, 52428 J\"ulich, Germany} 
\email{to.stollenwerk@fz-juelich.de}
\author{Stuart Hadfield}
\affiliation{Quantum Artificial Intelligence Lab (QuAIL), NASA Ames Research Center, Moffett Field, CA 94035, USA}
\affiliation{USRA Research Institute for Advanced Computer Science (RIACS), Mountain View, CA 94043, USA}
\email{stuart.hadfield@nasa.gov}
 \date{\today}
\begin{abstract}
    Algorithms based on non-unitary evolution have attracted much interest for ground state preparation on quantum computers.  
One recently proposed method makes use of ancilla qubits and controlled unitary operators to implement weak measurements related to imaginary-time evolution.
In this work we specialize and extend this approach to the setting of combinatorial optimization.
We first generalize the algorithm from exact to approximate optimization, taking advantage of several properties unique to 
classical problems. In particular we show how to select parameters such that the success probability of each measurement step is 
bounded away from $1/2$.
We then show how to 
adapt our paradigm to the setting of constrained optimization for a number of important classes of hard problem constraints. 
For this we compare and contrast both penalty-based and feasibility-preserving approaches, elucidating the significant advantages of the latter approach.
Our approach is general and may be applied to easy-to-prepare initial states as a standalone algorithm, or deployed as a quantum postprocessing stage to improve performance of a given parameterized quantum circuit.
We then propose a more sophisticated variant of our algorithm that adaptively applies a mixing operator or not, based on the measurement outcomes seen so far, as to speeds up the algorithm and helps the system evolution avoid slowing down or getting stuck suboptimally. 
In particular, we show that mixing operators from the quantum alternating operator ansatz 
can
be imported directly, both for the necessary eigenstate scrambling operator and for initial state preparation, and discuss quantum resource tradeoffs.  
 \end{abstract}
\maketitle

\begin{acronym}
\acro{algosimple}[MDQO]{Measurement-driven Quantum Optimization}
\acro{algo}[FC-MDQO]{Feedback-controlled Measurement-driven Quantum Optimization}
\end{acronym}
 
\section{Introduction}
Combinatorial optimization remains a broad and computationally challenging application area that has attracted much interest as an attractive target for potential quantum advantages in the near term and beyond~\cite{abbas2024challenges,kadowaki1998quantum,farhi2000quantum,albash2018adiabatic,farhi2014quantum,hadfield2019quantum,
sanders2020compilation}. 
While a 
variety of quantum algorithms have been proposed in this domain based on coherent unitary evolution, 
these algorithms come with a number of practical and theoretical shortcomings or difficulties, at least in terms of achieving advantages with real-world quantum devices of the foreseeable future~\cite{stilck2021limitations}. 
For instance, variational quantum algorithms require a parameter optimization stage, which can be as computational difficult as the target problem itself~\cite{bittel2021training}, even at short depths.
Other approaches can avoid this problem in some regimes but typically come with performance tradeoffs~\cite{abbas2024challenges}.   

In light of these and other difficulties, recent work had begun to explore more sophisticated approaches that use the quantum device in novel ways.
In particular, 
approaches based on non-unitary quantum evolution have attracted much interest for ground state preparation on quantum computers~\cite{motta2020determining,cubitt2023dissipative,mao2023measurement,ding2024single,bauer2024combinatorial,jordan2024optimization,choi2021rodeo,gustafson2020projective,hubisz2020quantum}. 
While these methods provide no free lunch, and naturally come with there own difficulties and tradeoffs, they present a new avenue towards practically useful 
approaches to 
quantum optimization.

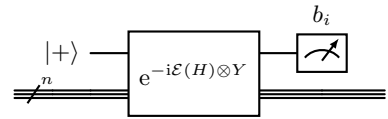
\begin{figure}[htb]
    \begin{center}
        \externalize{circuit_teaser_mdqo}{\begin{quantikz}[wire types={q,b}, classical gap=0.05cm, row sep={4ex, between origins}]
    \setwiretype{n}
    &
    & \lstick{$\ket{+}$}
    & \gate[2]{\ee^{-\ii \mathcal{E}(H) \otimes Y}} \setwiretype{q}
    & \meter{b_i} 
    & \setwiretype{n}
    &
    \\
    & \qwbundle{n}
    &
    &
    &
    &
\end{quantikz}

         }
    \end{center}
    \caption{Basic weak-measurement step for suitably transformed cost Hamiltonian $\mathcal{E}(H)$. The bottom system register encodes a distribution over problem solutions.}\label{fig:teaser}
\end{figure}

At the same time, quantum hardware continues to develop at steady pace, across of variety of platforms and architectures.
An exciting avenue is the advent of new hardware features such as mid-circuit measurement and adaptive control~\cite{chiaverini2004realization,botelho2022error,singh2023mid,koh2024readout,bar2025layered,govia2023randomized,hothem2025measuring}, inspired by quantum error correction applications as well as algorithmic ones. 
One recently proposed method~\cite{mao2023measurement} for ground state preparation 
makes use of mid-circuit 
weak measurements, with outcomes controlling subsequent circuit elements,  
to affect dynamics related to imaginary-time evolution. 
Each weak measurement is implemented through introducing an ancilla qubit, applying an entangling gate with the system register, and then measuring the ancilla as is depicted in Figure~\ref{fig:teaser}. 
Given a cost Hamiltonian~$H$ to optimize, the protocol of~\cite{mao2023measurement} utilizes an entangling gate derived using 
$\mathcal{E}(H) = \varepsilon H$ for a suitable 
constant $\varepsilon$;  
we will advantageously consider more general Hamiltonian transformations, as well as careful derivation of the transformation parameters to yield improved performance. 
Each weak measurement transforms the quantum state in a way that depends on 
$H$, and the set of outcomes obtained from repeated weak measurements can then be used to infer properties of the 
resulting quantum state, specifically regarding the induced distribution over candidate problem solutions.
Based on these properties one can adaptively decide to perform further weak measurement steps, or to stop, which for the case of combinatorial optimization amounts to measurement of the system register which returns a candidate solution sample.
When fast feedforward control~\cite{bluvstein2024logical,acharya2025quantum} is available, classically controlled quantum circuit elements may be adaptively applied in between weak measurements.
In each setting the goal remains the same, to drive the system register to an improved distribution over problem solutions relative to the initial state.

\subsection{Our approach and contributions} 
We adapt and 
extend the basic weak-measurement scheme of \cite{mao2023measurement} to the setting of combinatorial optimization, with a number of adaptions specific to this setting.
As optimization problems of interest are typically NP-hard, we do not expect to efficiently solve them  on quantum devices with any algorithm, so we first specialize to the setting of approximate optimization.
Here we first propose adaptive low-depth implementations that are designed to be repeated many times, with each run yielding a sample candidate solution. 
We explain how to transform the cost Hamiltonian such that desirable sequences of weak measurement outcomes are likely to occur, though as expected we show tradeoffs exist 
between the improvement per step and the 
corresponding \emph{success probability} of obtaining the desired weak measurement outcome.
In particular, 
by carefully setting algorithm parameters we can guarantee that the success probability of each weak measurement step 
is lower bounded away from $1/2$.
We later give a more general algorithm suitable for approximate or exact optimization in the context of (increasingly) fault-tolerant quantum devices. 

Our algorithms  
trade additional resources to 
probabilistically output samples from 
improved 
distributions over problem solution relative to the input state.
They may be applied to an easy-to-prepare initial state, such as a product state, or to an improved initial state at the expense of more resources, such as a pre-optimized parameterized quantum circuit. 
The latter case allows for including wide classes of quantum algorithms, variational or otherwise, such as QAOA~\cite{farhi2014quantum,hadfield2019quantum} and its variants, among others~\cite{abbas2024challenges}. As the quality (i.e., cost expectation) of the input state improves, so does the success probability of each step of our algorithm. 

Given a cost Hamiltonian $H$ to optimize, the core of our approach is to implement a sequence of weak measurements 
(cf.~Figure~\ref{fig:teaser}) 
that transform an initial state $\ket{\psi_0}$ to one proportional to 
\begin{equation}
    \sin^{k_1} (\tfrac{\pi}4+\mathcal{E}(H)) \cos^{k_0} (\tfrac{\pi}4+\mathcal{E}(H)) \; \ket{\psi_0}  
\end{equation}
where $k_0,k_1$ are the numbers of $0,1$ outcomes of the $k=k_0+k_1$ measurements, respectively. 
As we will show, the $1$ outcome improves the cost expectation of the previous state and so is considered a \enquote{success}. 
An important innovation we introduce is shifting and rescaling of the cost Hamiltonian,
$\mathcal{E}(H)$, a suitably chosen transformation of $H$, for example a polynomial.  
For simplicity we will primarily focus on linear transformations $\mathcal{E}(H)=\epsilon (\alpha I+ H)$.
We show that the constants $\alpha,\epsilon$ can be selected as to improve the success probabilities of the weak measurements (with respect to a naive application of the approach of \cite{mao2023measurement}). 
Indeed at any given step with current state~$\ket{\psi}$, the weak measurement success probability~$p_1$ is
shown to be \begin{equation} 
    p_1 
= \frac{1}{2} + \frac{1}{2} \bra{\psi}  \sin (2 \mathcal{E}(H)) \ket{\psi} \; > \frac12 + \text{const}
\end{equation}
when $\alpha,\epsilon$ derived appropriately from the particular problem instance; 
in particular we show it suffices to select parameters such that $0\leq \mathcal{E}(H)\leq \pi/4$. 
Hence we then have the important result that 
the success probability 
can be guaranteed to be strictly greater than $1/2$ for all possible states $\ket{\psi}$, 
which implies this property holds from all possible sequence of states resulting from repeated weak measurements applied to a given input state.
While this result holds generically, we explain how further improvements are possible in the case when one has additional knowledge regarding the support of the initial or current state. 
In practice one does not know the spectrum of $H$ by definition and so bounds can be used to select suitable parameters.  
We further discuss and analyze how the choice of these constants affects performance and show tradeoffs between success probability and relative improvement to the cost expectation, 
with the counterintuitive result that using (apparently suboptimal) bounds can sometimes improve performance.

Our algorithms are meant to be run many times, with the number of weak measurement steps in each run tailorable to the capabilties of current and future quantum hardware. 
It is hence advantageous to consider different notions of 
what defines a successful run
(not to be confused with \enquote{success} of each weak measurement step),
as well as different algorithm variants, particularly different return and overall termination criteria.  
When a target surplus or fraction of successes is achieved, corresponding to a desirable output distribution, the procedure terminates in that the system register is measured and a solution candidate returned.
Then the process is restarted.
For "bad" sequences of runs with a significant number of unsuccessful outcomes, the current run can be terminated (i.e., system register measured) early and the process restarted.
To this end the algorithm is embedded in an iterative outer control loop that decides when to proceed or restart based on the measurement data obtained so far, as indicated schematically in Figure~\ref{fig:flow-general}. 
Furthermore the parameters can be adaptively updated based on the overall set of outcomes obtained. For example, once a solution of given quality obtained, algorithm meta-parameters can be adjusted as to target doing even better. 
\begin{figure}[H]
    \begin{center}
        \externalize{flow_chart_general}{\tikzset{base/.style={draw=none,
        align=center,
        text centered,
    },
    box/.style={base,
        rectangle,  
        rounded corners, 
        fill=sbblue!50!white,
    },
    condition/.style={base,
        draw=none,
        diamond,
        fill=sbgreen!50!white,
        rounded corners,
    },
    measurement/.style={base,
        trapezium, 
        trapezium left angle = 65,
        trapezium right angle = 115,
        trapezium stretches,
        fill=sbred!50!white,
        rounded corners,
    },
    arr/.style={->,
        line width=1pt,
        color=black!90!white,
        -latex,
    },
    line/.style={line width=1pt,
        color=lightgray,
    },
    mylabel/.style={pos=0.5,
        fill=white,
        inner sep=0.5ex,
    },
}

\begin{tikzpicture}[font=\small,thick]

\node (qopt) [box, 
    minimum width=15em,
    minimum height=29ex,
    inner sep=1.0ex, 
    align=center,
    text depth=24ex,
    ]{Algorithm~1 or Algorithm~2
    };
\node (init) [box, 
    above=4ex of qopt,
    align=center,
    text width=7em,
    ]{Prepare initial state
    };
\node (weakstep) [box,
    inner sep=0.2ex,
    fill=white!80!sbblue,
    minimum height=22ex,
    text width=8em,
    xshift=1ex,
    yshift=-2ex,
    anchor=west,
    ] at (qopt.west) {weak\\measurement\\step\\(repeated)
        \\
        \includegraphics[width=1.0\textwidth]{./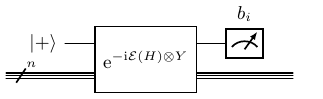}
    };
\draw[arr, rounded corners] ([yshift=3ex]weakstep.south east)
    -- ++(3.2em, 0ex)
    node[mylabel, 
        fill=white!50!sbblue,
        pos=0.4]{$b$}
    |- 
    node (midlabel) [box, 
        inner xsep=0.0pt,
        text width=6em,
        pos=0.25]{measurement outcome}
    ([yshift=-3ex]weakstep.north east)
    ;

\node (result) [box,
    fill=white,
    text width=6em,
    minimum width=5em,
    below=5ex of qopt,
    ] {measured bitstring $\mathbf{x}$
    };
\node (outerwhile) [condition,
    yshift=-17ex,
    inner sep=0.3ex,
    right=1em of qopt,
    ] {Termination
        \\
        criteria $\mathcal{T}(\mathbf{x})$
    };
\node (endresult) [box,
    fill=white,
    text width=6em,
    minimum width=5em,
    below=8ex of outerwhile,
    ] {resulting bitstring $\mathbf{y}$
    };

\draw[arr] (qopt) -- (result);
\draw[arr, rounded corners] (result.south)
    -- ++(0, -2ex)
    -- ++(7em, 0)
    |- (outerwhile.west)
    ;
\draw[arr, rounded corners] (outerwhile.north)
    -- ++(0, 35ex)
    node[mylabel, pos=0.2]{No}
    node[box, 
        pos=0.50]{(Adapt\\parameters?)}
    -| ([xshift=1em]init.north)
    ;
\draw[arr]  ([yshift=3ex,xshift=-2em]init.north) -- node[above, text width=5em, align=center, pos=0.0]{initial parameters}([xshift=-2em]init.north) ;
\draw[arr] (outerwhile) -- (endresult) node[mylabel, pos=0.4]{Yes};
\draw[arr] (init) -- (qopt);

\end{tikzpicture}

         }
    \end{center}
    \caption{Schematic depiction of the overall procedure. The blue box depicts either of Algorithms~\ref{alg:mdqo} and~\ref{alg:fcmdqo} given below, embedded in an outer control loop that reruns the algorithms, possibly adjusting their parameters based on the information obtained so far. When a solution meeting a target value is obtained (or other criteria, for example time limit reached), the overall best solution found~$\mathbf{y}$ is returned. 
}\label{fig:flow-general}
\end{figure}

We first propose and analyze Algorithm~\ref{alg:mdqo}, which consists of repeated weak measurement steps, with a control loop that determines when to halt to the process based on the outcomes obtained so far. We analyze Algorithm~\ref{alg:mdqo} to show that while successful steps usually improve the output distribution, the procedure can get stuck for example if the state becomes too close to a cost eigenstate, or if given such a state as input. To overcome such challenges  
we generalize to Algorithm~\ref{alg:fcmdqo}, which uses the weak measurement outcomes in a more sophisticated way to control aperiodic applications of a "scrambling" operator which by design induces amplitude exchange between cost eigenstates.
For both Algorithms~\ref{alg:mdqo} and~\ref{alg:fcmdqo} we explain how each may be directly extended to the important and more general setting of constrained optimization problems. 
In particular
we show that algorithmic primitives from quantum alternating operator ansatze (QAOA)~\cite{farhi2014quantum,hadfield2019quantum} can be directly imported 
to yield suitable scrambling operators for wide classes of problems, and that comparable advantages are often obtained in our setting over other approaches such as the use of penalty terms. 
Finally, we provide numerical demonstrations of our method for both the MaxCut and Max Independent Set problems, with the same graph instance used throughout the paper for convenience. 

\paragraph*{Quantum post-processing:}
Importantly we propose two distinct settings of application for quantum optimization algorithms based on weak measurements, i.e. for Algorithms~\ref{alg:mdqo} and~\ref{alg:fcmdqo}. 
The first considers a fiducial initial state such as the equal superposition state $\ket{\psi_0}=\ket{+}^{\otimes n}=\sum_x\tfrac1{\sqrt{2^n}} \ket{x}$, or another that is relatively easy to prepare, and as such yields a self-contained algorithm.
The second setting envisons Algorithm~\ref{alg:mdqo} instead as a \textit{quantum post-processing} procedure.
Here $\ket{\psi_0}$ is some input state which may depend on the problem $H$, created by a nontrivial quantum algorithm, such as a quantum circuit ansatz. 
For example, $\ket{\psi_0}$ could be a QAOA$_p$ state for $H$, or possibly another ansatz.
We show that even applying a single weak measurement step as post-processing produces samples from an improved distribution as compared to the ansatz alone, when post-selecting on the successful outcome.

\paragraph*{Zero-cost constraint preservation:} 
Broadly speaking, quantum algorithms for constrained optimization problems deal with constraints by biasing dynamics towards feasible states through penalty terms, or directly implementing operators that are constraint invariant on feasible initial states. Both approaches typically come with significant overhead, which scales with the circuit depth. We show that for Algorithm~\ref{alg:mdqo}, the only requirement is that the initial state is feasible. As the action of the weak measurement operator is diagonal with respect to the computational basis, this guarantees that subsequent system states remain feasible independent of the specific measurement outcomes. Hence for broad classes of constrained optimization problems the overhead from the constraints is independent of the number of weak measurement steps. 

The remainder of the paper is structured as follows.
The following subsection briefly overviews connections between  weak-measurement based algorithms and a number of related protocols in the literature. 
We detail and analyze Algorithm~\ref{alg:mdqo} in Section~\ref{sec:mdqo}, and explicitly generalize it to constrained optimization in Section~\ref{sec:constrainedOpt}. 
In Section~\ref{sec:fcmdqo} 
we give Algorithm~\ref{alg:fcmdqo} which generalizes  Algorithm~\ref{alg:mdqo} for the case that fast mid-circuit feedback is available. 
Finally, we discuss several extensions of our algorithms and provide a future outlook in Section~\ref{sec:conclusion}.

\subsection{Related approaches} \label{sec:relatedApproaches}
Algorithms based on iterated matrix or operator multiplication are a ubiquitous paradigm across computational sciences, especially for eigenvalue and eigenvector estimation problems. 
Classically, the most fundamental example is the power method.
Given a diagonalizable matrix $A$, a suitable initial state $\vec{\psi}$, and an integer $k$, one computes $A^k\vec{\psi}$, which converges with $k$ to an eigenvector of $A$ corresponding to the eigenvalue of largest absolute value.
A number of more sophisticated methods extend the same basic idea with transformed matrices, such as inverse iteration, or classical methods related to imaginary time evolution.

In the quantum case our approach aims to \enquote{multiply} (up to, of course, state normalization) the initial state by a trigonometric polynomial of the cost Hamiltonian $H$ as in Equation~\eqref{eq:ent-meas-operation}. 
A sequence of recent papers~\cite{rodriguez2017heat,motta2020determining,mao2023measurement,alam2023solving,bauer2024combinatorial,morris2024performant,gluza2024double,zhong2024classical,wang2025imaginary} have proposed quantum algorithms directly based on or related to imaginary time evolution, which correspond to creating a quantum state proportional to $\exp(\tau H)\ket{\psi_0}$ with $\tau \in \mathds{R}$.\footnote{We emphasize that if one could implement arbitrary imaginary time evolution
then one can solve hard computational problems such as those known to be NP-hard, a task we don't believe to be generally efficient on a quantum computer.
Hence \say{imaginary time} approaches typically trade this complexity into other aspects such requiring a large number of measurements or effectively state tomography, difficulty in parameter setting, or 
reduced  
success probability.}
At a high-level, these approaches like our own try to drive a system register toward a desired state or subspace.

More broadly, numerous authors have derived algorithms within this theme of dynamics derived from or inspired by ideas in physics.
These include engineered approaches related to exploiting effects such as dissipation, damping, and cooling, e.g \cite{verstraete2009quantum,polla2021quantum,ding2024single,basso2024optimizing,cubitt2023dissipative,zhang2023dissipative,zapusek2025scaling,zhan2025rapid,maciejewski2024improving}, among others. 
Our approach is also closely related to ideas from measurement induced quantum steering~\cite{roy2020measurement,volya2025state}.
In terms of utilizing measurements as algorithmic primitives, 
a long line of closely related work has considered 
algorithms driven by project measurements, as well as approaches in the measurement-based quantum computational model, to both optimization and decision problems~\cite{childs2002quantum,benjamin2017measurement,zhao2019measurement,ferguson2021measurement,mao2023measurement,stollenwerk2024measurement,zhang2024solving}. 
Our approach of repeated \lq\lq weak\rq\rq\ measurements amounts is also naturally related to approaches based on the adiabatic theorem as well as the quantum Zeno effect, themselves both successful algorithmic primitives~\cite{farhi2000quantum,albash2018adiabatic,papageorgiou2014estimating,kremenetski2023quantum,herman2023constrained,berwald2025zeno}.
In particular our choice of weak measurement is closely connected to 
though distinct from (repeated) single-qubit Quantum Phase Estimation~\cite{kitaev2002classical,
dutkiewicz2022heisenberg,ni2023low}.

Finally we remark that a recently proposed paradigm called Decoded Quantum Interferometry~\cite{jordan2024optimization,schmidhuber2025hamiltonian,khattar2025verifiable}, attempts to create states proportional to $p(H)\;\ket{+}^{\otimes n}$, for $p(\cdot)$ a suitably chosen polynomial.
The catch is that for this algorithm to succeed the quantum computer must also coherently solve a reduced problem related to decoding, which is nontrivial. 
Our approach instead seeks to create a state proportional to a trigonometric polynomial of a suitably rescaled and shifted cost Hamiltonian $C=\mathcal{E}(H)$, which avoids the need for complicated subroutines to be executed coherently as well as other resource overheads. 
 
\section{Measurement-driven Quantum Optimization}
\label{sec:mdqo}
In contrast to general settings where one may seek to prepare approximate \emph{quantum} eigenvectors (i.e., superpositions of basis states), 
in combinatorial optimization we seek \emph{classical} bit strings that give high-quality solutions, with each  
corresponding to individual computational basis states. 
Hence classical candidate solutions can always be efficiently checked and stored, which is not true for the quantum case generally. 
While the optimal solution is always desirable, due to computational complexity considerations a high-quality approximation is often all we can hope to efficiently achieve, and as such we focus on the approximate optimization setting. Here we follow the general recipe of repeated state preparation and measurement common to quantum optimization algorithms, which produces a large number of candidate solution samples, and the best solution found overall is returned; refer to the outer loop of Figure~\ref{fig:flow-general}. 
We leverage various properties particular to the combinatorial optimization setting in the design of our algorithms. 

Here we propose our first quantum algorithm for approximate optimization based on weak measurements: \ac{algosimple}, given in Algorithm~\ref{alg:mdqo} and depicted in Figure~\ref{fig:flow-mdqo}. 
It is inspired by the approach of~\cite{mao2023measurement} for quantum ground state estimation, but tailored to approximate optimization. 
The algorithm is potentially attractive for the near-term, especially 
when relatively few weak measurement steps are to be applied before termination. 
For simplicity of presentation we focus here on the case of unconstrained optimization problems; we extend our approach to the more general setting of constrained optimization in Section~\ref{sec:constrainedOpt}. 

As explained Algorithm \ref{alg:mdqo} is to be run repeatedly as indicated in Figure~\ref{fig:flow-general}, after which the best solution found overall is returned.

\begin{algorithm}[H]
    \textbf{Input:}
\begin{itemize}[leftmargin=2em]
\item 
        Problem Hamiltonian $H$ encoding a pseudo-boolean cost function $h(x)$ to maximize over $x\in\{0,1\}^n$
    \item
        Rescaled Hamiltonian $C:=\mathcal{E}(H)$ satisfying\\ $0\leq C \leq \frac{\pi}4$
    \item 
        Initial $n$-qubit state  $\ket{\psi_0}$ 
    \item 
        Return criteria 
        $\mathcal{R}(\mathbf{b},\dots)$
        depending on measurement outcomes $\mathbf{b}$ 
        and other parameters 
\end{itemize}
\textbf{Output:} 
\begin{itemize}[leftmargin=2em]
    \item 
        Classical bitstring giving approximate solution
\end{itemize}
\vskip 0.2pc
\begin{algorithmic}[1]
    \State $\ket{\psi} \gets \ket{\psi_0}$ 
    \State $\mathbf{b} \gets \emptyset$
    \While{$\neg \; \mathcal{R}(\mathbf{b})$} 
        \State Introduce fresh ancilla qubit in $\ket{+}$ state
            \label{alg:line:mdqo_fresh_ancilla}
        \State 
            \begin{varwidth}[t]{0.9\textwidth}
                Apply entangling gate across system and ancilla registers 
                \begin{equation}
                    \ket{\mathbf{\Psi}} \gets \ee^{-\ii C \otimes Y}\ket{\psi} \otimes \ket{+}
                \end{equation}
            \end{varwidth}
            \label{alg:line:mdqo_entangling}
        \State 
            \begin{varwidth}[t]{0.9\textwidth}
                Measure ancilla qubit in computational basis with outcome $b \in \{0, 1\}$ \label{alg::line:mdqo_measurement}
            \end{varwidth}
            \label{alg:line:mdqo_measurement}
        \State Append $b$ to $\mathbf{b}$
    \EndWhile 
    \State 
        \begin{varwidth}[t]{0.9\textwidth}
            Measure $\ket{\psi}$ in computational basis and return the resulting $n$-bit string
        \end{varwidth}
\end{algorithmic}

     \caption{\acf{algosimple}} \label{alg:mdqo}
\end{algorithm}
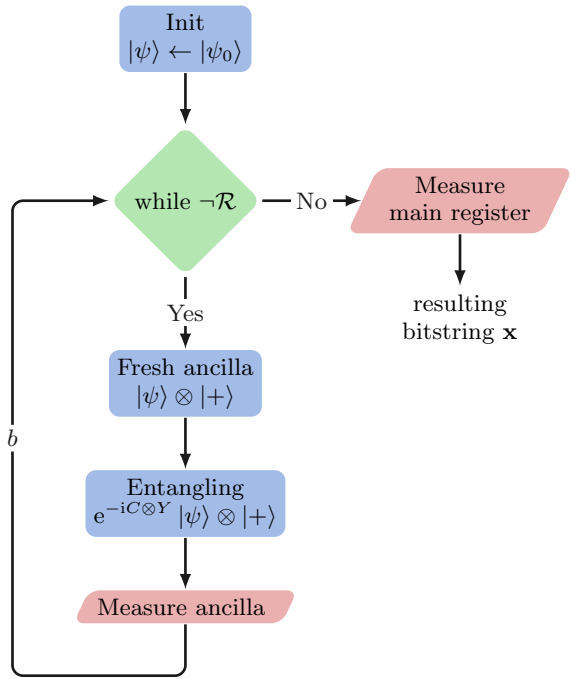
\begin{figure}
    \begin{center}
        \externalize{flow_chart_mdqo}{\tikzset{base/.style={draw=none,
        align=center,
        text centered,
    },
    box/.style={base,
        rectangle,  
        rounded corners, 
        fill=sbblue!50!white,
    },
    condition/.style={draw=none,
        diamond,
        fill=sbgreen!50!white,
        rounded corners,
    },
    measurement/.style={base,
        trapezium, 
        trapezium left angle = 65,
        trapezium right angle = 115,
        trapezium stretches,
        fill=sbred!50!white,
        rounded corners,
    },
    arr/.style={->,
        line width=1pt,
        color=black!90!white,
        -latex,
    },
    line/.style={line width=1pt,
        color=lightgray,
    },
    label/.style={pos=0.5,
        fill=white,
        inner sep=0.5ex,
    },
}
\begin{tikzpicture}[font=\small,thick]

\node (init) [box,
    ] {Init\\
        $\ket{\psi} \leftarrow \ket{\psi_0}$
    };
\node (while) [condition,
    below=5ex of init,
    ] {while
        $
        \neg \mathcal{R}
        $
    };
\node (freshancilla) [box,
    below=7ex of while,
    ] {Fresh ancilla\\
        $
        \ket{\psi}
        \otimes
        \ket{+} 
        $
    };
\node (entangling) [box,
    below=5ex of freshancilla,
    ] {Entangling\\
        $
        \ee^{-\ii C \otimes Y}
        \ket{\psi}
        \otimes
        \ket{+} 
        $
    };
\node (midmeasure) [measurement,
    below=5ex of entangling,
    ] {Measure ancilla
    };
\node (measure) [measurement,
    text width=6em,
    minimum width=5em,
    right=4em of while,
    ] {Measure main register
    };
\node (result) [box,
    fill=white,
    text width=6em,
    minimum width=5em,
    below=5ex of measure,
    ] {resulting bitstring $\mathbf{x}$
    };

\draw[arr] (init) -- (while);
\draw[arr] (while) -- (freshancilla) node[label]{Yes};
\draw[arr] (while) -- (measure) node[label]{No};
\draw[arr] (freshancilla) -- (entangling);
\draw[arr] (entangling) -- (midmeasure);
\draw[arr] (measure) -- (result);
\draw[arr, rounded corners] (midmeasure.south) 
    -- ++(0,-5ex) 
    -- ++(-7em, 0) 
    |- (while.west) 
    node[label, pos=0.25]{$b$}
    ;

\end{tikzpicture}

         }
    \end{center}
    \caption{
        Flow chart for Algorithm~\ref{alg:mdqo}. This algorithm may represent an instantiation of the blue box in Figure~\ref{fig:flow-general}.
        \label{fig:flow-mdqo}
    }
\end{figure}

The weak measurement operation is implemented via lines \ref{alg:line:mdqo_fresh_ancilla} through \ref{alg:line:mdqo_measurement} of Algorithm~\ref{alg:mdqo} (cf.~Figure~\ref{fig:flow-mdqo}). 
As we show below 
the cumulative effect of these operations depends only on 
the relative proportions of the measurement outcomes, not the order in which they occur. 
The return criterion $\mathcal{R}(\mathbf{b})$ for the while loop is based on the set of obtained measurement outcomes. 
When a high proportion of successes have occurred, the state has been favorably transformed and we measure the main register to obtain a candidate solution. 
Conversely, if relatively few successful outcomes have occurred, we can restart immediately. 
In this case we can measure the main register regardless which produces an additional candidate solution sample. 

\subsection{Algorithm Details}\label{sec:mdqo_algo_details}
We next explain and analyze the algorithm and its components in greater detail.
\paragraph*{Preliminaries:} Consider unconstrained optimization problems over bitstrings, 
which can be expressed~\cite{hadfield2021representation} as diagonal Hamiltonians in terms of Pauli $Z$ operators 
\begin{equation} \label{eq:costHam}
    H = 
    h_0 I + 
    \sum_{u=1}^n
    h_u
    Z_u 
    +
    \sum_{u<v} 
    h_{uv}
    Z_u Z_v + \dots \; ,
\end{equation}
derived from a pseudo-boolean cost function $h(x)$ we seek to optimize, such that $H\ket{x}=h(x)\ket{x}$ for all strings $x\in\{0,1\}^n$.
Without loss of generality consider the case of maximization. 
Constraint satisfaction problems such as MaxCut or Max$k$Sat are motivating examples. 
A wide variety of such problems are NP-hard to optimize, and as such we can only hope to efficiently find approximate solutions in the worst-case. 
A given solution $x$ achieves approximation ratio $h(x)/h^*$, where $h^*$ is the optimal cost value. An algorithm, quantum or classical, achieves approximation ratio $r$ if it is guaranteed to always output a solution $y$ with $h(y)/h^* \geq r$. For probabilistic approaches a common metric is the expected cost value (or expected approximation ratio), as we consider in our numerical examples to follow.\footnote{Generally speaking, algorithms that lack rigorous performance bounds but may perform well in practice are known as \textit{heuristics}, which include the algorithms of this paper and is common across many quantum approaches to optimization.} 

Given $H$, we will derive a new transformed 
cost Hamiltonian $C:=\mathcal{E}(H)$, which hence satisfies $C\ket{x}=c(x)\ket{x}$ for the transformed cost function $c= \mathcal{E} \circ h$. We next analyze the weak measurement step before explaining how the cost Hamiltonian transformation $\mathcal{E}$ is derived. 

\paragraph*{Weak measurements:} The core of the algorithm are 
the weak measurements applied to the main register probabilistically
\begin{align} \label{eq:ent-meas-operation}
    W_b 
    &:= 
    \bra{b} \ee^{-\ii C \otimes Y} \ket{+}
    \\
    &= 
    \sum_x 
    \underbrace{\frac{1}{\sqrt{2}}
    \left(
    \cos{c(x)} 
    + (-1)^{b \oplus 1}
    \sin{c(x)}
    \right)
    }_{=
        \begin{cases}
            \cos\left(c(x) + \frac{\pi}{4}\right) 
            &\text{for } b=0
            \\
            \sin\left(c(x) + \frac{\pi}{4}\right) 
            &\text{for } b=1
        \end{cases}
    }
    \ket{x}
    \bra{x} \end{align}
for possible outcomes $b \in \{0, 1\}$, 
As explained this operation is affected through appending, entangling and measuring an ancilla qubit in Algorithm~\ref{alg:mdqo}. 
Observe that $[W_0,W_1]=0$, which implies that the particular ordering of a sequence of weak measurement outcomes does not matter in terms of the resulting state, only the count of each outcome. Furthermore, as the terms of $C$ mutually commute, so do those of $C\otimes Y$, and hence we can assume the entangling gate and overall weak measurement are implemented exactly (i.e., Trotterization error is zero here, which is distinct from the approach of~\cite{mao2023measurement}). 
Therefore repeated applications of the combined entangling and measurement operations onto some initial state $\ket{\psi_0}$ result in
\begin{equation}
    \ket{\psi_{k_0, k_1}}
    \propto
    W_0^{k_0}
    W_1^{k_1}
    \ket{\psi_0}
    =
    \sum_x A_{k_0 k_1} (c(x)) \braket{x}{\psi_0} \ket{x},
    \label{eq:ent-meas-state}
\end{equation}
where $k_0,k_1$ denote the number of $0,1$ outcomes respectively, and 
\begin{equation} \label{eq:ampMod}
    A_{k_0, k_1}(c) 
    := 
    \cos^{k_0}\left(c + \frac{\pi}{4}\right)
    \sin^{k_1}\left(c + \frac{\pi}{4}\right)
\end{equation}
is the \emph{amplitude modulation function}.
The trigonometric form of 
this function suggests that
we can choose the rescaling $C=\mathcal{E}(H)$ (as to sufficiently bound its spectrum),
such that the $\sin^{k_1}\left(c + \frac{\pi}{4}\right)$ is monotonically increasing over the values of $c$. 
Hence, with each successful weak measurement step, the quantum amplitudes of higher cost states are amplified more that those of lower cost, as desired. 
We emphasize that 
$A_{k_0, k_1}$ only \textit{modulates} the state amplitude with respect to cost; the resulting amplitudes also critically depend on the support of the state they are applied to, i.e. the factors $\braket{x}{\psi_0}$ in Equation~\eqref{eq:ent-meas-state}. 
Said differently, for any sequence of measurement outcomes, the above analysis 
shows that the action of repeated weak measurements is diagonal with respect to the computational basis. 

For large $k_0+k_1$ the amplitude modulation becomes approximately Gaussian~\cite{mao2023measurement,ochoa2018simultaneous},
with peak position depending on the 
aggregated measurement outcomes 
given by
\begin{equation}
    P(k_0, k_1) := \frac{1}{2} \asin(\frac{k_1 - k_0}{k_0 + k_1}) \; .
    \label{eq:peak-position}
\end{equation}
This behavior is depicted in Figure~\ref{fig:amplitude-modulation-function-overview}. Observe that the peak position moves towards higher cost values for high fractions of desired outcomes $b=1$, and towards lower cost values for high fractions of undesired outcomes $b=0$. 
With increasing number of total measurements the peaks get narrower.
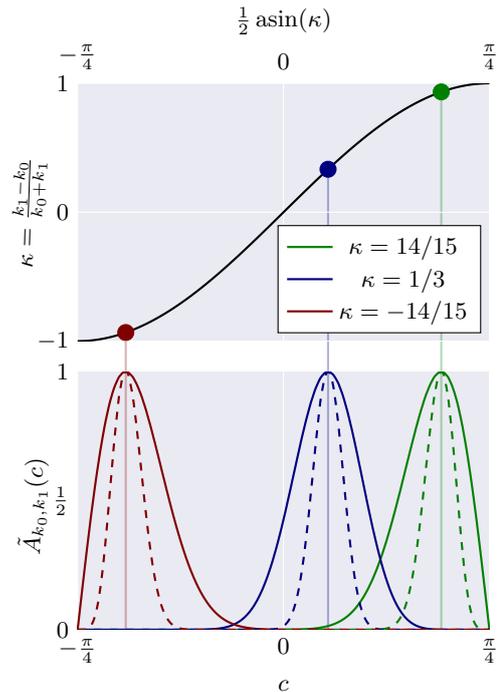
\begin{figure}[htb]
    \begin{center}
        \externalize{amplitude-modulation-function-overview}{\pgfplotsset{
    layers/my layer set/.define layer set={
        background,
        intermediate,
        main,
        foreground
    }{
background/.code={\pgfsetlayers{background}},
        intermediate/.code={\pgfsetlayers{intermediate}},
        main/.code={\pgfsetlayers{main}},
        foreground/.code={\pgfsetlayers{foreground}},
    },
    set layers=my layer set,
}
\begin{tikzpicture}
    \begin{groupplot}[
        group style={group size=1 by 2, 
vertical sep=3ex,
            ylabels at=edge left,
        },
trig format=rad,
         width=7cm,
         xmin=-pi/4,
         xmax=pi/4,
         xtick={-pi/4, 0.0, pi/4},
         xticklabels={$-\frac{\pi}{4}$, $0$, $\frac{\pi}{4}$},
         xlabel={$c$},
         xmajorgrids,
         tick style={draw=none}, axis line style={draw=none},
         grid style={color=white},
    ]
    \nextgroupplot[ylabel={$\kappa=\frac{k_1 - k_0}{k_0 + k_1}$},
xlabel={$\frac{1}{2} \asin(\kappa)$},
        y label style={rotate=-pi/6.28, anchor=center, align=center},
        yticklabels={$-1$, $0$, $1$},
        ytick={-1.0, 0.0, 1.0},
        ymajorgrids,
        samples=400,
        height=5cm,
        ymin=-1.0,
        ymax=1.0,
        axis x line=top,
    ]
    \begin{pgfonlayer}{background}
        \fill[sbbackground] (rel axis cs:0,0) rectangle (rel axis cs:1,1);
    \end{pgfonlayer}
    \addplot[name path=G,black!50!black,thick,domain={-pi/4:pi/4}]{sin(2*x)};

    \tikzmath{\t = 30;}
    \tikzmath{\kzero = 1;
        \kone = \t - \kzero;
        \x = asin((\kone-\kzero)/(\kzero + \kone))/2.0;
        \y = sin(2*\x);
        \xr = \x;
        \yr = \y;
    };
    \coordinate (topr) at (axis cs:\xr,\yr);
    \addplot[only marks, mark=*, mark size=3pt, green!50!black] coordinates {(\x, \y)};
    \tikzmath{\kzero = 10;
        \kone = \t - \kzero;
        \x = asin((\kone-\kzero)/(\kzero + \kone))/2.0;
        \y = sin(2*\x);
        \xm = \x;
        \ym = \y;
    };
    \coordinate (topm) at (axis cs:\xm,\ym);
    \addplot[only marks, mark=*, mark size=3pt, blue!50!black] coordinates {(\x, \y)};
    \tikzmath{\kzero = 29;
        \kone = \t - \kzero;
        \x = asin((\kone-\kzero)/(\kzero + \kone))/2.0;
        \y = sin(2*\x);
        \xl = \x;
        \yl = \y;
    };
    \coordinate (topl) at (axis cs:\xl,\yl);
    \addplot[only marks, mark=*, mark size=3pt, red!50!black] coordinates {(\x, \y)};

    \nextgroupplot[ylabel={$\tilde{A}_{k_0, k_1}(c)$},
        y label style={yshift=0.5em, rotate=-pi/6.28, anchor=center, align=center},
        yticklabels={$0$, $\frac{1}{2}$, $1$},
        ytick={0.0, 0.5, 1.0},
        height=5cm,
        samples=400,
legend entries={$\kappa=14/15$, $\kappa=1/3$, $\kappa=-14/15$},
        legend style={at={(0.98,1.15)},anchor=south east},
        ymin=0.0,
        ymax=1.0,
    ]
    \begin{pgfonlayer}{background}
        \fill[sbbackground] (rel axis cs:0,0) rectangle (rel axis cs:1,1);
    \end{pgfonlayer}

    \tikzmath{\t = 30;}
    \tikzmath{\kzero = 1;
        \kone = \t - \kzero;
        \m = asin((\kone-\kzero)/(\kzero + \kone))/2.0;
        \ml = \m;
    };
    \coordinate (bottomr) at (axis cs:\ml,0);
    \addplot[name path=G,green!50!black,thick,domain={-pi/4:pi/4}]{((cos(x + pi/4)/(cos(\m + pi/4.0)))^\kzero)*((sin(x + pi/4)/(sin(\m + pi/4)))^\kone)};
    \tikzmath{\kzero = 10;
        \kone = \t - \kzero;
        \m = asin((\kone-\kzero)/(\kzero + \kone))/2.0;
        \mm = \m;
    };
    \coordinate (bottomm) at (axis cs:\mm,0);
    \addplot[name path=G,blue!50!black,thick,domain={-pi/4:pi/4}]{((cos(x + pi/4)/(cos(\m + pi/4.0)))^\kzero)*((sin(x + pi/4)/(sin(\m + pi/4)))^\kone)};
    \tikzmath{\kzero = 29;
        \kone = \t - \kzero;
        \m = asin((\kone-\kzero)/(\kzero + \kone))/2.0;
        \mr = \m;
    };
    \coordinate (bottoml) at (axis cs:\mr,0);
    \addplot[name path=G,red!50!black,thick,domain={-pi/4:pi/4}]{((cos(x + pi/4)/(cos(\m + pi/4.0)))^\kzero)*((sin(x + pi/4)/(sin(\m + pi/4)))^\kone)};

\tikzmath{\t = 150;}
    \tikzmath{\kzero = 5;
        \kone = \t - \kzero;
        \m = asin((\kone-\kzero)/(\kzero + \kone))/2.0;
    };
    \addplot[name path=G,green!50!black,thick,dashed,domain={-pi/4:pi/4}]{((cos(x + pi/4)/(cos(\m + pi/4.0)))^\kzero)*((sin(x + pi/4)/(sin(\m + pi/4)))^\kone)};
    \tikzmath{\kzero = 50;
        \kone = \t - \kzero;
        \m = asin((\kone-\kzero)/(\kzero + \kone))/2.0;
    };
    \addplot[name path=G,blue!50!black,thick,dashed,domain={-pi/4:pi/4}]{((cos(x + pi/4)/(cos(\m + pi/4.0)))^\kzero)*((sin(x + pi/4)/(sin(\m + pi/4)))^\kone)};
    \tikzmath{\kzero = 145;
        \kone = \t - \kzero;
        \m = asin((\kone-\kzero)/(\kzero + \kone))/2.0;
    };
    \addplot[name path=G,red!50!black,thick,dashed,domain={-pi/4:pi/4}]{((cos(x + pi/4)/(cos(\m + pi/4.0)))^\kzero)*((sin(x + pi/4)/(sin(\m + pi/4)))^\kone)};
    \end{groupplot}
    \begin{pgfonlayer}{intermediate}
    \draw[green!50!black, opacity=0.3, thick] (topr) -- (bottomr);
    \draw[blue!50!black, opacity=0.3, thick] (topm) -- (bottomm);
    \draw[red!50!black, opacity=0.3, thick] (topl) -- (bottoml);
    \end{pgfonlayer}
\end{tikzpicture}

         }
    \end{center}
    \caption{\label{fig:amplitude-modulation-function-overview}
        Peak position (top) of the normalized amplitude modulation function $\tilde{A}_{k_0, k_1}(c)=\frac{A_{k_0, k_1}(c)}{\max_c A_{k_0, k_1}(c)}$ (bottom) for different number of measurement outcomes. 
        The total number of measurements $K=k_0 + k_1$ is $K=30$ (solid lines) and $K=150$ (dashed lines).
    }
\end{figure}

We next explain how to rescale the cost Hamiltonian to lie in a suitable spectral range as described, and furthermore how to do this in a way as to guarantee 
the success probability of each weak measurement is bounded away from~$1/2$.   

\paragraph*{Transformed Hamiltonian:} \label{sec:transformedHam}
For a given cost Hamiltonian $H$, first consider the simple case of creating a state 
$\ket{\psi}\propto H\ket{\psi_0}$. 
Measuring $\ket{\psi}$ then yields a sample $x$ with probability proportional to $h(x)^2$. Hence if the cost function ranges over both positive and negative numbers, this will have the effect of amplifying the probability of both desirable high-cost as well as undesirable low-cost states. 

\begin{figure}[htb]
    \begin{center}
        \externalize{amplitude-modulation-function-squared-single-measurement}{\begin{tikzpicture}
\begin{axis}[trig format=rad,
        width=7cm,
        height=4cm,
        xmin=-pi/4,
        xmax=pi/4,
        xtick={-pi/4, 0.0, pi/4},
        xticklabels={$-\frac{\pi}{4}$, $0$, $\frac{\pi}{4}$},
        xmajorgrids,
        xlabel={$c=\mathcal{E}(h)$},
        ymin=0.0,
        ymax=1.0,
        yticklabels={$0$, $\frac{1}{2}$, $1$},
        ytick={0.0, 0.5, 1.0},
        ymajorgrids,
tick style={draw=none}, axis line style={draw=none},
        grid style={color=white},
    ]
    \begin{pgfonlayer}{background}
        \fill[sbbackground] (rel axis cs:0,0) rectangle (rel axis cs:1,1);
        \fill[sbbackground!50!white] (rel axis cs:0,0) rectangle (rel axis cs:0.5,1);
    \end{pgfonlayer}

    \addplot[name path=G,red!50!black,thick,domain={0:pi/4}]{cos(x + pi/4)^2};
    \addplot[name path=G,green!50!black,thick,domain={0:pi/4}]{sin(x + pi/4)^2};
    \addplot[name path=G,red!50!black,thick,dashed,domain={-pi/4:0}]{cos(x + pi/4)^2};
    \addplot[name path=G,green!50!black,thick,dashed,domain={-pi/4:0}]{sin(x + pi/4)^2};
    \node [color=red!50!black,
        anchor=north west,
        inner sep=1ex] at (axis cs:-pi/4, 0.8) {\footnotesize $\cos^2(c + \frac{\pi}{4})$};
    \node [color=green!50!black,
        anchor=north east,
        inner sep=1ex] at (axis cs:pi/4, 0.8) {\footnotesize $\sin^2(c + \frac{\pi}{4})$};
\end{axis}
\end{tikzpicture}

         }
    \end{center}
    \caption{\label{fig:amplitude-modulation-function-single-measurement}
        Squared amplitude modulation functions for a single weak measurement operation (cf.~Eq.~\eqref{eq:ent-meas-operation}),
        in the domain of the rescaled spectrum (solid lines) and beyond (dashed lines).
    }
\end{figure}
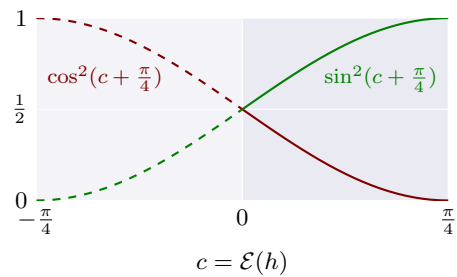

To restore the desired behaviour we introduce a scaling function $\mathcal{E}$ to shift the spectrum of the transformed Hamiltonian $C=\mathcal{E}(H)$ to be 
bounded by constants 
$a \leq C \leq b$ 
For the weak measurement step described above we will use 
\begin{equation} \label{eq:Hambounds}
    0 \leq C \leq \pi/4.
\end{equation}
As one can see in Figure~\ref{fig:amplitude-modulation-function-single-measurement}, 
this ensures 
greater magnification of measurement probability for higher cost states across the whole rescaled spectrum, in the case of a successful weak measurement step.
Furthermore, we seek to saturate the bounds in the sense that $\min_x c(x) \simeq 0$ and $\max_x c(x) \simeq \pi/4$. 
Here the upperbound is chosen so 
that $\pi/4+\max_xc(x)$ is as close as possible to $\pi/2$ as to maximize the $\sin(\cdot)$ function term in Eq. \eqref{eq:ampMod} above. As we explain below the lower bound is chosen to favorably shift the success probability. 

In general, we may consider $C = \mathcal{E}(H)$ to be any function of $H$, such as a suitable polynomial, that ensures Eq.~\eqref{eq:Hambounds} is satisfied. 
Nevertheless in the remainder of this work we focus on the linear case 
\begin{equation}
    C := \mathcal{E}(H) = \varepsilon(\alpha I + H)
    \label{eq:scaling_function_linear}
\end{equation}
for suitable constants $\varepsilon,\alpha$ that are derived from the problem instance~$H$. 
Clearly, obtaining the tightest possible upper bound on $c(x)$ is as hard as the problem itself. 
Suppose we know bounds on the spectrum of the input cost Hamiltonian $H$ such that 
\begin{equation}
    -s\leq H \leq t.
    \label{eq:scaling_bounds}
\end{equation}
These can always be obtained from the coefficients of Eq.~\eqref{eq:costHam}, 
which we assume we are given. 
In some cases tighter 
bounds can be derived from the input problem instance. 
Then we can use these values to select $\alpha,\varepsilon$ as to satisfy Eq.~\ref{eq:Hambounds}. Indeed, setting $\alpha=s$ suffices to ensure $C\geq 0$. Then setting $$\varepsilon=\frac{\pi}{4(s+t)}$$ 
implies $C\leq \pi/4$ as desired. 
In practice one may have limited information in terms of tight bounds to $H$ and so could search, i.e.~repeat the algorithm, over different values of $B$ (which determines $\alpha,\varepsilon$). 
For example, for MaxCut one could use a set of values $B=|E|,|E|-1, \dots$, 
or for larger ranges binary search could be employed. 

We emphasize that while in many quantum algorithms the effect of explicitly including Identity terms in the cost Hamiltonian is trivial, in our case their inclusion effects the entangling operation of Algorithm~\ref{alg:mdqo} (albeit, at the overall expense of a single-qubit Pauli rotation gate applied to the ancilla qubit). 
\paragraph*{Example: Max-Cut problem.}
Given a graph $G=(V, E)$ with $n=|E|$ nodes and $m=|E|$ edges.
The Hamiltonian we seek to maximize is
\begin{equation}
    H = 
    \frac{m}{2} I-\frac12 \sum_{(uv) \in E} 
    Z_u Z_v
    \label{eq:max_cut_hamiltonian}
\end{equation}
which encodes the number of cut edges $h(x)=\sum_{(ij)\in E} x_i \oplus x_j$. 
Retaining the identity term $\frac{m}{2} I$ in $H$ ensures $H\geq 0$. 
Only bipartite graphs have $h^*=m$, which is an efficiently checkable property. Hence we could choose $\varepsilon = \frac{\pi}{4m}$ and $\alpha=0$ in \eqref{eq:scaling_function_linear} as to guarantee 
$C\in [0, \frac{\pi}{4})$. 

\begin{rem}\label{rem:bound-support}
    Consider a weak measurement step applied to a state $\ket{\psi}$. 
    If we know bounds on the support of $\ket{\psi}$, then better constants can be derived by considering the effective cost Hamiltonian on this subspace. For example, if promised that $\ket{\psi}$ is only supported on states of positive cost of $H$, no shift factor $\alpha$ is needed, and the rescaling factor $\varepsilon$ can be selected larger than otherwise.
    Indeed suppose we know constants $\tilde{s},\tilde{t}$ such that $s \leq \tilde{s} \leq H|_{\text{support}(\ket{\psi})} \leq \tilde{t} \leq t$. Then by the same argument as above it suffices to use $\alpha = \tilde{s}$ and $\varepsilon =\frac{\pi}{4(\tilde{s}+\tilde{t})}$. 
    
    At the same time we reemphasize that $\ket{\psi}$ necessarily must have support on target states; any number of steps of Algorithm 1 cannot produce states outside of the span of the initial support. As this support may be unknown 
    we propose 
    the incorporation of additional operators 
        to alleviate this issue 
    in Algorithm~\ref{alg:fcmdqo} of Section~\ref{sec:fcmdqo}.
\end{rem}

\paragraph*{Return Criterion:}
The return criterion 
specifies when we decide to measure the main register to obtain a candidate solution.
Here, by design, we incorporate as much flexibility as possible as to accommodate wider strategies appropriate to emerging quantum hardware with their variety of space, depth, and noise tradeoffs, as well as potential speed and timing limitations applicable to mid-circuit measurements. 
To this end we suggest and consider several strategies. 

The first criteria is to return when we believe we achieved a distribution concentrated on high quality solutions.
We can utilize the cumulative weak-measurement outcomes 
that determine the peak position of the amplitude modulation function~\eqref{eq:peak-position},
and trigger the return (measure the main register) if the peak position
surpasses
a specified threshold value $T$, where $0 \leq \mathcal{E}(T) \leq \frac{\pi}{4}$. 
The value of $T$ can be set based on problem properties, such as bounds to the optimal cost, or adaptively using the cost of the best cost and distribution of solutions already found. 
In this case the return criterion would read
\begin{equation}
    \mathcal{R}(\mathbf{b}) 
    = 
   \big( P(k_0, k_1) \geq \mathcal{E}(T)\big)
    \label{eq:return_criterion_default}
\end{equation}
which effectively defines a target region in between the threshold and maximum cost value $[T, h^*]$. 

An alternative criterion which requires even simpler classical feed-forward processing, would be to 
return if a minimum difference in desirable and undesirable outcomes $L$ is reached
\begin{equation}
    \mathcal{R}(\mathbf{b}) 
    = 
    \big(k_1 - k_0 \geq L \big) \; .
    \label{eq:return_criterion_simple}
\end{equation}

For the inner loop, an important criterion which does not require classical feed forward processing as above,
is to terminate
when we reach some ceiling of $K_\mathrm{T}$ steps,
\begin{equation}
\mathcal{R}(\mathbf{b}) 
    = 
    \big( k_1 + k_0 \geq K_\mathrm{T} \big) \; .
    \label{eq:return_criterion_ceiling}
\end{equation}
The value $K_\mathrm{T}$ may relate to hardware constraints such as coherence time, and is especially relevant to near-term devices.

Finally, if a partial sequence of measurement outcomes indicates we are unlikely to reach a satisfactory distribution we can save considerable resources by returning early. In particular if a substantial number of bad outcomes have occurred the procedure can be terminated and restarted well before $K_\mathrm{T}$ trials are reached. Hence we additionally 
propose a so-called reset condition triggered if ever fixed difference $R$ of failures to successes is reached,
\begin{equation}
    \mathcal{R}_\mathrm{R}(\mathbf{b}) 
    = 
\big( k_0 - k_1 \geq R \big) \; .
    \label{eq:reset_criterion}
\end{equation}
We elaborate on this condition and its implications in Section~\ref{sec:multistep}. 

In practice one may incorporate all or a subset of the above conditions. Additionally, 
a so-called burn-in may be utilized, where the reset or other return conditions above become active only after a fixed number of initial weak measurement steps.

\paragraph*{Adaptive algorithm variants:}
The algorithm parameters, i.e., those of the scaling function $\mathcal{E}$ and the return criterion $\mathcal{R}$, can adaptively changed during execution, toward sampling the best solution possible. 
Parameters could be set as to enforce targeting of solutions better than the best already found, or tuned based on the performance of all past runs.
Adaptive parameter setting may be applied 
within the inner loop of~Figure~\ref{fig:flow-general} informed by mid-circuit measurement outcomes 
(which allow for the calculation of the peak position of the amplitude modulation function~\eqref{eq:peak-position}), 
or in the outer loop informed by sampled bitstrings as a result of main register measurements. 
For example the parameter $L$ in the return criterion~\eqref{eq:return_criterion_simple} in the outer loop
can be progressively increased to target distributions likely to improve on the best solution already found. 

Similarly, the threshold value $T$ in the return criterion~\eqref{eq:return_criterion_default} 
may be adaptively increased, 
effectively narrowing down the target region.
Starting with a low value (e.g., corresponding to solutions obtained from random guessing or oter classical algorithms), 
the threshold value is subsequently increased by setting it to the best know cost value so far (outer loop), or to a value informed by the highest peak-position so far (inner loop).
Analogue considerations apply for 
the lower bound in \eqref{eq:scaling_bounds}.

\subsection{Resource estimation} \label{sec:resource_est}
Here we explain that if we can implement the $n$-qubit operator 
$U=\exp(-\ii t C), t\in \mathds{R}$
efficiently, then we can implement the $(n+1)$-qubit operator $U=\exp(-\ii t C \otimes Y)$ with small additional overhead.
Recall we assume linear rescaling $C= \varepsilon(\alpha I + H)$.
A key property of combinatorial optimization our algorithm takes advantage of is the fact that $H$ is diagonal, and can be decomposed into a sum of $m$ mutually commuting terms $H=\sum_{j=1}^m H_j$ as in Eq.~\ref{eq:costHam}. 
Commutation implies that $\exp(-\ii t H)=\prod_j \exp(-\ii t H_j)$, with efficient quantum circuits exist for implementing each $\exp(-\ii t H_j)$. 
Hence if $m=\textrm{poly}(n)$ then $\exp(-\ii t H)$ can be implemented efficiently.
The same ideas apply to implementing the operator $\exp(-\ii C \otimes Y)=\exp(-\ii \varepsilon (\alpha I+H)\otimes Y)$ which can be similarly decomposed using $\exp(-\ii \varepsilon H\otimes Y)=\prod_j \exp(-\ii \varepsilon H_j \otimes Y)$.  
As indicated in Figure~\ref{fig:resources}, 
using the approach and basic gate set of \cite{barenco1995elementary}, $U=\exp(-\ii t C \otimes Y)$ can be implemented using $O(mk)$ CNOT gates and $O(m)$ single-qubit rotation gates, when the cost Hamiltonian $C$ contains $m$ terms of degree at most $k$. 
We emphasize the inclusion of the linear shift $\alpha$ requires only one additional single-qubit Pauli rotation gate, yet has an important effect on the success probability and performance, as discussed. 
In terms of classical resources, for Algorithm $1$ the classical coprocessing of the mid-circuit measurement outcomes is only required to compute 
whether the threshold conditions are satisfied which can also be accomplished with minimal overhead. 
\begin{figure}[htb]
    \begin{center}
        \begin{tikzpicture}
            \node (qaoa) []{\externalize{circuit_resources_qaoa}{\begin{quantikz}[column sep=0.5em, row sep={5ex, between origins}]
    \lstick[1]{$i$}
    & 
    & \ctrl{1}
    &
    & \ctrl{1}
    & 
    \\
    \lstick[1]{$j$}
    &
    & \targ{}
    & \gate{R_Z}
    & \targ{}
    & 
\end{quantikz}
                 }
            };
            \node (mdqo) [below=0.5em of qaoa,
                ]{\externalize{circuit_resources_mdqo}{\begin{quantikz}[column sep=0.4em, row sep={5ex, between origins}]
    \lstick{$\ket{+}$}
    & \gate{R_X(\frac{\pi}{2})}
    & \ctrl{1}
    &
    &
    &
    & \ctrl{1}
    & \gate{R_X(\frac{-\pi}{2})}
    & \ \ldots \
    & \meter{}
    \\
    \lstick[1]{$i$}
    & 
    & \targ{}
    & \ctrl{1}
    &
    & \ctrl{1}
    & \targ{}
    &
    & \ \ldots \
    &
    \\
    \lstick[1]{$j$}
    & 
    &
    & \targ{}
    & \gate{R_Z}
    & \targ{}
    &
    & 
    & \ \ldots \
    &
\end{quantikz}
                 }
            };
        \end{tikzpicture}
    \end{center}
    \caption{Resource overhead from implementing 
        a partial step in Algorithm~\ref{alg:mdqo} (bottom) 
        compared to a
        a single QAOA phase-separation layer (top), both
        exemplified for single quadratic term $Z_i Z_j$ in a cost Hamiltonian. }
    \label{fig:resources}.
\end{figure}
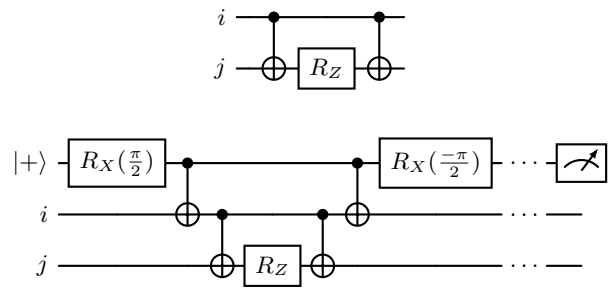

\subsection{Numerical Verification}\label{sec:numerics_mdqo}
Here we provide numerical evidence to support the claims made thus far as well as the analysis to follow. 
As many of the properties of our algorithm are fairly generic, we focus on a single fixed problem instance in order to elucidate these qualities.  
\begin{figure}[htb]
    \begin{center}
        \begin{tikzpicture}\node (plot) []{\externalize{results_mdqo}{\pgfplotstableread[col sep=comma]{./figures/tikz/data/results_mdqo_exp_value_best.csv}{\expvalbest}
\pgfplotstableread[col sep=comma]{./figures/tikz/data/results_mdqo_exp_value_worst.csv}{\expvalworst}
\pgfplotstableread[col sep=comma]{./figures/tikz/data/results_mdqo_success_prob_best.csv}{\successprobbest}
\pgfplotstableread[col sep=comma]{./figures/tikz/data/results_mdqo_success_prob_worst.csv}{\successprobworst}

\pgfplotstablegetcolsof{\expvalbest}
\pgfmathsetmacro{\numcols}{\pgfplotsretval}

\pgfmathsetmacro{\lastycol}{\numcols - 1}

\pgfplotscreateplotcyclelist{color list}{
  {sbblue, solid, thick},
  {sbred, solid, thick},
  {sbgreen, solid, thick},
  {sborange, solid, thick},
  {sbpurple, solid, thick},
{sbblue, dashed, thick},
  {sbred, dashed, thick},
  {sbgreen, dashed, thick},
  {sborange, dashed, thick},
  {sbpurple, dashed, thick},
}
\begin{tikzpicture}[trim axis left]
    \begin{groupplot}[
        group style={group size=1 by 2, 
vertical sep=8ex,
            ylabels at=edge left,
        },
    ]
    \nextgroupplot[width=8cm,
        height=5cm,
tick style={draw=none}, ymax=5.1,
        ytick={0, 1, 2, 3, 4, 5},
        enlarge x limits=0.05,
        enlarge y limits=0.05,
        xlabel={$L=k_1 - k_0$},
        ylabel={$\langle H \rangle$},
        legend style={draw=none,
opacity=0.8,
            inner sep=1pt,
            legend columns=3,
            column sep=0.1ex,
            at={(0.99,0.05)},
            anchor=south east,
        },
        legend image post style={xscale=0.5}, grid=both,
        cycle list name=color list,
        axis background/.style={fill=sbbackground},
        axis line style={draw=none},
        grid style={color=white},
    ]

    \addlegendimage{empty legend}
    \addlegendentry[]{$k_0$}
\foreach \i in {1,...,\lastycol} {

\pgfplotstablegetcolumnnamebyindex{\i}\of{\expvalbest}\to\pgfplotsretval
\addplot table [
            x index=0,
            y index=\i,
        ] {\expvalbest};
        \addlegendentryexpanded{\pgfplotsretval}
    }
    \foreach \i in {1,...,\lastycol} {

\pgfplotstablegetcolumnnamebyindex{\i}\of{\expvalworst}\to\pgfplotsretval
\addplot table [
            x index=0,
            y index=\i,
            dashed,
        ] {\expvalworst};
    }
    \addplot [only marks, 
              mark size=1.5pt, 
              mark=*,
              color=sbpurple, 
              ] coordinates { (110,2) };
    \nextgroupplot[width=8cm,
        height=5cm,
tick style={draw=none}, enlarge x limits=0.05,
        enlarge y limits=0.05,
        xlabel={$L=k_1 - k_0$},
        ylabel={$p_1$},
grid=both,
        cycle list name=color list,
        axis background/.style={fill=sbbackground},
        axis line style={draw=none},
        grid style={color=white},
    ]

\foreach \i in {1,...,\lastycol} {

\pgfplotstablegetcolumnnamebyindex{\i}\of{\successprobbest}\to\pgfplotsretval
\addplot table [
            x index=0,
            y index=\i,
        ] {\successprobbest};
}
    \foreach \i in {1,...,\lastycol} {

\pgfplotstablegetcolumnnamebyindex{\i}\of{\successprobworst}\to\pgfplotsretval
\addplot table [
            x index=0,
            y index=\i,
            dashed,
        ] {\successprobworst};
    }
    \addplot [only marks, 
              mark size=1.5pt, 
              mark=*,
              color=sbpurple, 
              ] coordinates { (110,0.794) };
    \end{groupplot}
\end{tikzpicture}

                 }
            };
            \node (graph) [right=-0.5em of plot.south east,
                yshift=8ex,
                anchor=south east,
                scale=0.8,
                ]{\externalize{results_mdqo_graph}{
\tikzset{%
    vertex/.style={%
        circle,
        draw=none,
        minimum width=1ex,
    },
    edge/.style={%
        ultra thick,
        sbgray,
    },
}
\def\radius{5ex}
\begin{tikzpicture}
    \node (1) [vertex, fill=sbgreen] at ({360/5*(1.25)}:\radius) {};
    \node (2) [vertex, fill=sbred] at ({360/5*(2.25)}:\radius) {};
    \node (3) [vertex, fill=sbred] at ({360/5*(3.25)}:\radius) {};
    \node (4) [vertex, fill=sbgreen] at ({360/5*(4.25)}:\radius) {};
    \node (5) [vertex, fill=sbgreen] at ({360/5*(5.25)}:\radius) {};

    \draw[edge, sbred] (1) -- (2);
    \draw[edge, sbgreen] (2) -- (3);
    \draw[edge, sbred] (3) -- (4);
    \draw[edge, sbred] (1) -- (3);
    \draw[edge, sbred] (2) -- (4);
    \draw[edge, sbred] (2) -- (5);

\end{tikzpicture}

                     }
            };
{};
        \end{tikzpicture}
    \end{center}
    \caption{\label{fig:results_mdqo}
Expectation value of $\langle H \rangle$ (i.e. number of cuts)  of the MaxCut Hamiltonian~\eqref{eq:max_cut_hamiltonian} for the displayed graph with respect to the state in \eqref{eq:ent-meas-state} (top) and probability for the desired outcome $p_1$ (bottom) against the difference in desired outcomes and undesired outcomes $L=k_1-k_0$ for various fixed numbers of undesired outcomes $k_0$. 
        The solid lines and dashed lines indicate the tight bound $B=h^*$ and loose bound $B=m$, respectively, used to determine the cost Hamiltonian rescaling parameters (see Section~\ref{sec:effectsConsts}).
        The markers indicate the starting point for simulations below (cf.~Figure~\ref{fig:results_fcmdqo}).
        }
\end{figure}
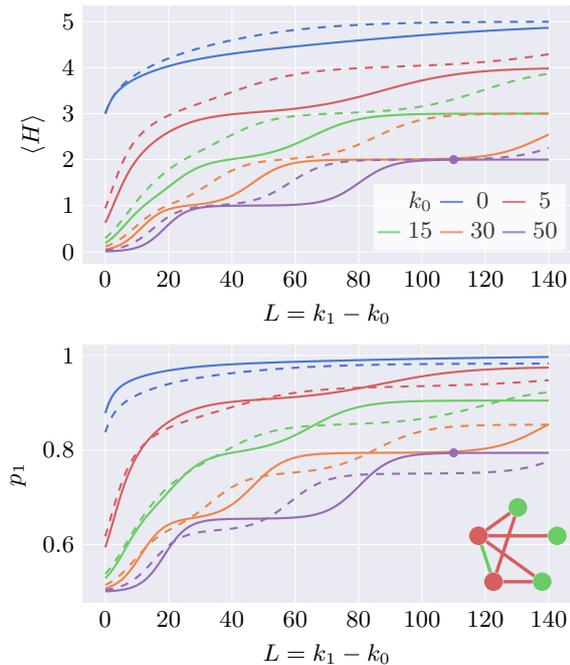

In Figure~\ref{fig:results_mdqo} the results of Algorithm~\ref{alg:mdqo} for a simple MaxCut instance are displayed.
As expected, the cost expectation value $\langle H \rangle$ increases monotonically with the number of desired outcomes $k_1$.
Plateau-like features are observed at eigenvalues of the cost Hamiltonian, where the rate of change slows down. 
In the example problem these are the numbers $\{0,1, \dots, 5\}$.
This is a result of the peak of the amplitude modulation function (cf.~Figure~\ref{fig:amplitude-modulation-function-overview}) intersecting with the eigenvalues according to Equation~\eqref{eq:ent-meas-operation}.
The performance decreases, as expected, with the number of undesired outcomes.

Recall the discussion of how to suitably bound and transform the cost Hamiltonian in Section~\ref{sec:transformedHam}. 
Here we consider two cases of using the ideal value ($h^*=5$), or using the bound ($B=m=6>h^*$). 
Remarkably, we observe that using the looser bound results in even better performance than the tight bound in terms of cost expectation value. On the other hand the single-step success probability is seen to closely track that of the ideal case, but not with strict improvement nor detriment. 
Hence these results 
demonstrate the effect of not knowing the optimal bound to the cost function is 
not the roadblock one might expect it to be, and surprising can even be sometimes beneficial. 
We 
discuss and analyze this effect further 
in the next subsection. 
Furthermore, 
we suggest the possibility or adaptively searching over different values of the cost bound $B$. 
Indeed, one could view $\varepsilon=\varepsilon(B)$ as a variational parameter to be further optimized.

\subsection{Algorithm Analysis: Single Step}
Consider a single weak measurement step 
(i.e., an iteration of the while loop of Algorithm 1) with state $\ket{\psi}$ in the main register; this could be the fiducial initial state, or the state resulting from some number of previous steps of the algorithm, or as we discuss below in Section~\ref{sec:postproc} it could be the state resulting from an optimized parameterized quantum circuit ansatz that one seeks to improve upon. 

\paragraph*{Success probability:}
For maximization problems, measuring a $1$ in the ancilla register corresponds to a "success". 
For a single weak measurement step the probability $p_1=p_1(\ket{\psi})$ of measuring $1$ is
\begin{equation} \label{eq:succProb}
    p_1 
    = \bra{\psi} W_1^\dagger W_1 \ket{\psi}
    = \frac{1}{2} + \frac{1}{2} \bra{\psi}  \sin 2 
    C \ket{\psi} 
\end{equation}
and similarly $p_0 = \frac{1}{2} - \frac{1}{2} \bra{\psi}  \sin 2 C \ket{\psi}$.

Observe that if we have 
shifted and rescaled 
the cost Hamiltonian such that
 $0\leq C \leq \pi/4$, then we are guaranteed that $p_1 \geq 1/2$. Moreover with the mild assumption that $\ket{\psi}$ is not a $0$ eigenstate of $C$ this inequality is strict, 
$$p_1 > \frac12 > p_0$$
as desired. 
Applying the identity $\sin(x)\geq \frac2\pi x$ on the interval $[0,\pi/2]$ gives the 
lower bound 
\begin{equation} \label{eq:probLowerBound}
    p_1 \geq \frac12 + \frac2{\pi} \bra{\psi}
    C \ket{\psi}  . 
\end{equation}

We next prove the desirable property that the success probability increases monotonically with each subsequent success.
\begin{prop} Consider a state $\ket{\psi}$ to which a successful weak measurement step of Algorithm 1 is applied, resulting in the normalized state $\ket{\phi}$ Then the probability $p_1$ of a subsequent 
    weak measurement being successful 
    satisfies
    \begin{equation} \label{eq:probIncrease}
        p_1(\ket{\phi}) \, \geq \, p_1(\ket{\psi}).
    \end{equation}
\end{prop}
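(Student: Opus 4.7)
The plan is to reduce the inequality to a classical correlation (Chebyshev sum) inequality by working in the computational basis, where all relevant operators are diagonal.

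First I will write the post-measurement state explicitly. Since $W_1 = \sin(C+\pi/4)$ is diagonal in the computational basis, the normalized state after a successful step has amplitudes
\begin{equation}
\braket{x}{\phi} = \frac{\sin(c(x)+\pi/4)\,\braket{x}{\psi}}{\sqrt{\bra{\psi}\sin^2(C+\pi/4)\ket{\psi}}}.
\end{equation}
Using Eq.~\eqref{eq:succProb} for $p_1(\ket{\phi})$ and letting $q_x := |\braket{x}{\psi}|^2$, the desired inequality \eqref{eq:probIncrease} reduces, after subtracting the common $1/2$ and cancelling $\sum_x q_x \sin^2(c(x)+\pi/4) > 0$ in the denominator, to
\begin{equation}
\sum_x q_x\, w(x)\, f(x) \;\geq\; \Bigl(\sum_x q_x\, w(x)\Bigr)\Bigl(\sum_x q_x\, f(x)\Bigr),
\end{equation}
where $w(x) := \sin^2(c(x)+\pi/4)$ and $f(x) := \sin(2c(x))$.

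Next I would apply Chebyshev's sum inequality (equivalently, the statement that comonotonic random variables have non-negative covariance under any probability measure). It suffices to verify that both $w$ and $f$ are non-decreasing functions of the scalar $c = c(x) \in [0,\pi/4]$. A direct differentiation gives $w'(c) = \sin(2c+\pi/2) = \cos(2c) \geq 0$ and $f'(c) = 2\cos(2c) \geq 0$ for $c \in [0,\pi/4]$, which is guaranteed by the standing hypothesis $0 \leq C \leq \pi/4$ on the rescaled Hamiltonian. Since $w$ and $f$ are comonotonic functions of the single parameter $c$, the inequality above holds for the probability distribution $\{q_x\}$, and the result follows.

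No step is particularly hard; the only subtle point is recognizing that the monotonicity range for $f(c) = \sin(2c)$ aligns exactly with the spectral bound $C \leq \pi/4$ enforced in Algorithm~\ref{alg:mdqo} — if one allowed $c$ to exceed $\pi/4$, $f$ would stop being monotone in $c$ and the Chebyshev argument would fail. This reinforces the motivation for the spectral window \eqref{eq:Hambounds}. The argument also makes clear that equality in \eqref{eq:probIncrease} holds only in degenerate cases where $c(x)$ is constant on the support of $\ket{\psi}$, i.e., when $\ket{\psi}$ already lives inside a single cost eigenspace of $C$.
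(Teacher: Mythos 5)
Your proof is correct, and it lands on the same underlying fact as the paper's but packages it differently. The paper argues in one line that $p_1(\ket{\phi})=\langle S^4\rangle_\psi/\langle S^2\rangle_\psi \geq \langle S^2\rangle_\psi = p_1(\ket{\psi})$ with $S=\sin(C+\pi/4)$, i.e.\ the moment inequality $\langle S^4\rangle\geq\langle S^2\rangle^2$ (non-negative variance of $S^2$; the paper's invocation of monotonicity there is actually superfluous, and its $C^4,C^2$ is a typo for $S^4,S^2$). You instead work from the $\tfrac12+\tfrac12\langle\sin 2C\rangle$ form of $p_1$, reduce to $\mathrm{Cov}_q(w,f)\geq 0$ for $w=\sin^2(c+\pi/4)$ and $f=\sin 2c$, and close with Chebyshev's sum inequality for comonotone functions. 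Both reductions are clean; yours is slightly longer but makes the probabilistic structure explicit. One small correction to your closing commentary: since $w=\tfrac12(1+f)$ identically, your covariance is just $\tfrac12\mathrm{Var}_q(f)\geq 0$, so $w$ and $f$ are comonotone \emph{automatically} and the inequality \eqref{eq:probIncrease} holds even without the spectral window $0\leq C\leq\pi/4$ — exactly as the paper's moment-inequality proof requires no monotonicity. The window is genuinely needed for the companion results (that $p_1>1/2$ and that the cost expectation increases, Proposition~\ref{lem:costIncrease}), but not for this one. Your characterization of the equality case (support confined to a single eigenspace of $C$) is right.
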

\begin{proof}
    Let $S:=\sin(\pi/4+C)$. From Eq.~\eqref{eq:ent-meas-operation} above we have $p_1(\ket{\psi})=\langle S^2\rangle_\psi$ and 
    $p_1(\ket{\phi})=\langle S^4\rangle_\psi/p_1(\ket{\psi})$. As $S$ is monotonically increasing with respect to the spectrum of $C$ for any $\ket{\psi}$ we have $\langle C^4 \rangle \geq \langle C^2 \rangle^2$ which implies $p_1(\ket{\phi})\geq p_1(\ket{\psi})$ as claimed. 
\end{proof}

\begin{rem}\label{rem:positive-feedback}
    Equations \eqref{eq:succProb} and~\eqref{eq:probIncrease} 
reveal the positive feedback aspect of our approach; after a successful outcome, $\langle H\rangle$ typically improves, and so the probability of success
for the next weak measurement improves as well. 
This effect is clearly demonstrated in the bottom plot of Figure~\ref{fig:results_mdqo}. 
\end{rem}
    
\paragraph*{Effect on cost expectation:}
Here we show 
how under mild assumptions the expected cost always improves with each successful weak measurement step. 

\begin{prop} \label{lem:costIncrease}
    Consider a given cost Hamiltonian $H$ of Eq.~\eqref{eq:costHam} we seek to maximize, and a suitable rescaling $C=\varepsilon(\alpha I + H)$ as in Eq.~\eqref{eq:Hambounds}.
Suppose we are given a normalized quantum state $\ket{\psi}$ to which we apply a single weak measurement step of Algorithm $1$, which succeeds and results in new state $\ket{\psi'}$. Then if $\ket{\psi}$ is not an eigenstate of $H$ we have 
    $$ \langle H \rangle_{\psi'} >  \langle H \rangle_{\psi}$$
    as desired; else $\langle H \rangle_{\psi'} =  \langle H \rangle_{\psi}$.
\end{prop}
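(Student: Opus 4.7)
The plan is to exploit the fact that every operator in sight is diagonal in the computational basis, so the statement reduces to a classical inequality about covariances under the probability distribution $p(x)=|\braket{x}{\psi}|^2$. Writing $S:=\sin(\pi/4+C)$ and using $\ket{\psi'}=W_1\ket{\psi}/\|W_1\ket{\psi}\| = S\ket{\psi}/\sqrt{\langle S^2\rangle_\psi}$ (recall $W_1$ is diagonal), I would first compute
\begin{equation}
    \langle H\rangle_{\psi'}
    \;=\; \frac{\langle\psi|S H S|\psi\rangle}{\langle\psi|S^2|\psi\rangle}
    \;=\; \frac{\langle S^2 H\rangle_\psi}{\langle S^2\rangle_\psi},
\end{equation}
where the second equality uses $[H,S]=0$. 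Hence the claim $\langle H\rangle_{\psi'}\geq \langle H\rangle_\psi$ is equivalent to the correlation inequality
\begin{equation}
    \langle S^2 H\rangle_\psi \;\geq\; \langle S^2\rangle_\psi\, \langle H\rangle_\psi,
\end{equation}
which is precisely the statement that $\mathrm{Cov}_p(S^2,H)\geq 0$ where expectations are taken with respect to the classical distribution $p(x)=|\braket{x}{\psi}|^2$.

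Next I would verify the monotonicity hypothesis needed to apply Chebyshev's sum inequality (equivalently the one-dimensional FKG inequality). Since $C=\varepsilon(\alpha I+H)$ with $\varepsilon>0$, the eigenvalue $c(x)$ is a strictly increasing affine function of $h(x)$. The scaling condition $0\leq C\leq \pi/4$ ensures $\pi/4+c(x)\in[\pi/4,\pi/2]$, an interval on which $\sin$ is strictly increasing and nonnegative. Therefore $s(x)^2=\sin^2(\pi/4+c(x))$ is a (weakly) monotonically nondecreasing function of $h(x)$, and is strictly increasing on distinct eigenvalues of $H$. Chebyshev's sum inequality then gives
\begin{equation}
    \sum_x p(x)\, s(x)^2 h(x) \;\geq\; \Big(\sum_x p(x)\,s(x)^2\Big)\Big(\sum_x p(x)\,h(x)\Big),
\end{equation}
which is exactly the desired inequality.

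Finally I would address strictness. Equality in Chebyshev's inequality holds only if one of the two functions is constant on the support of $p$, i.e.\ on $\mathrm{supp}(\ket{\psi}):=\{x:\braket{x}{\psi}\neq 0\}$. Since $s(\cdot)^2$ is a strictly monotone function of $h(\cdot)$ on the relevant interval, $s^2$ is constant on $\mathrm{supp}(\ket{\psi})$ if and only if $h$ is; the latter is precisely the condition that $\ket{\psi}$ be supported on a single $H$-eigenspace, i.e.\ that $\ket{\psi}$ is an eigenstate of $H$. Hence the inequality is strict whenever $\ket{\psi}$ is not an eigenstate of $H$, and trivially an equality otherwise (since then $\ket{\psi'}=\ket{\psi}$ up to phase). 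The only subtle point — and the main thing to get right in the writeup — is this strictness/equality dichotomy, since the monotonicity of $s^2$ is only strict on distinct eigenvalues of $H$; but because $s$ is an injective function of the diagonal entries of $H$, the two notions of "constant on the support" coincide, which closes the argument.
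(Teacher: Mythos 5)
Your proof is correct and follows essentially the same route as the paper's: after reducing $\langle H\rangle_{\psi'}-\langle H\rangle_\psi$ to a covariance of $H$ with a monotone function of $H$ under $p(x)=|\braket{x}{\psi}|^2$ (your $\mathrm{Cov}_p(S^2,H)\geq 0$ is identical to the paper's $\langle H\sin 2C\rangle_\psi\geq\langle H\rangle_\psi\langle\sin 2C\rangle_\psi$ via $\sin^2(\tfrac{\pi}{4}+c)=\tfrac12(1+\sin 2c)$), both arguments conclude by the Chebyshev/FKG correlation inequality for nondecreasing functions. If anything, your treatment of the strictness dichotomy via the equality condition of Chebyshev's inequality and the injectivity of $s^2$ as a function of $h$ is slightly more rigorous than the paper's informal ``higher cost states are magnified more'' argument.
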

\begin{proof}
Suppose we obtain the weak measurement outcome $1$. Then the normalized state becomes
$$\ket{\psi'}=\frac{\sin(\frac{\pi}4I+C)\ket{\psi}}{\sqrt{p_1}}$$
and so
using $[H,C]=0$ along with standard trigonometric identities (see App.~\ref{app:trig}) we have for the cost expectation
\begin{align} 
    \langle H\rangle_{\psi'}
    &=\frac{\bra{\psi}H\sin^2(\frac{\pi}4I+C)\ket{\psi}}{p_1}
    \\
    &= \frac{ \langle H\rangle_{\psi} +\langle H\sin(2 C )\rangle_{\psi}}{1 +\langle  \sin 2 C \rangle_{\psi}}   .
\label{eq:changeExpecC}
\end{align}
Subtracting $\langle H\rangle_{\psi}$ from each side give the change in cost expectation to be
\begin{align} 
  \langle H\rangle_{\psi'} - \langle H\rangle_{\psi} 
    &= \frac{\langle H\sin(2 C )\rangle_{\psi}-\langle H\rangle_{\psi} \langle \sin(2 C )\rangle_{\psi}}
    {2p_1}.
    \label{eq:changeExpecC2}
\end{align}
From the assumption $C\geq 0$, and 
using that $x,\sin x$ are both non-negative and monotonically increasing on $[0,\pi/2]$, we have the inequality 
$$\langle  H\sin(2 C )\rangle_\psi \geq \; \langle H\rangle_\psi \cdot\langle \sin(2 C )\rangle_\psi$$ 
which plugging into \eqref{eq:changeExpecC} gives $\langle H\rangle_{\psi'}\geq \langle H\rangle_{\psi}$. 

It remains to show $\langle H\rangle_{\psi'}  > \langle H\rangle_{\psi}$ whenever $\ket{\psi}$ contains at least two components of different 
costs.
For the case that $\ket{\psi}$ is an eigenstate of $H$, i.e., $H\ket{\psi}=\lambda\ket{\psi}$ for some $\lambda$,
observe that irrespective of whether the weak measurement succeeds or fails, either applied operator acts as a constant and hence $\langle H \rangle_{\psi'} =  \langle H \rangle_{\psi}$. 
Alternatively, now suppose $\ket{\psi}$ contains components of at least two distinct cost eigenspaces. 
As seen from Eq.~\eqref{eq:ent-meas-operation}, 
the amplitude of each supported computational basis state $\ket{x}$ in multiplied by a factor proportional to $\sin(\pi/4+c(x))$. 
As the rescaling parameters are chosen such that 
$\sin(\pi/4+C)$ is monotonically increasing with respect to the spectrum of $C$, higher cost states are magnified more than lower cost ones, which gives the claim.
\end{proof}

\begin{rem}\label{rem:tradeoff}
    Observe the inverse tradeoff between success probability and improvement to the cost expectation evident in Eq.~\eqref{eq:changeExpecC},  which each depend on the particular state $\ket{\psi}$.  In practice one may be willing to tolerate slightly lower probability of success in  exchange for greater improvement when success occurs. 
\end{rem} 

\begin{rem}\label{rem:support-not-eigenstate}
In practice we may assume $\ket{\psi_0}$ is not an eigenstate of the cost Hamiltonian $H$. If it was, it would be easily detected, as computational basis measurements will return only strings of the same cost. In that case, the input ansatz $\ket{\psi_0}$ may be altered, 
or a suitable scrambling operator (see Section~\ref{sec:fcmdqo}) may be applied to  $\ket{\psi_0}$ to yield a new initial state without this property. 
\end{rem}

We next 
elaborate on how the choice of algorithm parameters further affect these quantities. 

\paragraph*{Effects of choice of problem rescaling:} 
\label{sec:effectsConsts}

We next elaborate on the consequences of using a suitable bound in place of the unknown optimal cost for our algorithm, which results in a smaller than necessary value of $\varepsilon$. 
Surprisingly, we show that the effect of this on the cost expectation value can be beneficial as opposed to detrimental. Furthermore, even with very loose bounds we retain the important property that the per step success probability remains at least~$1/2$. Here assume we know a value $B>h^*$ such that $0\leq H\leq B$. The results of this section easily extend to the case where both upper and lower bounds to the spectrum of $H$ are necessary.

In particular we explain that for most cost functions and input states $\ket{\psi}$, the normalized cost expectation $\langle H\rangle_{\psi'}$ is maximized for some 
$\varepsilon \in (0,\varepsilon(h^*))=(0,\frac{\pi}{4h^*})$.  
(By \textit{most} here we mean $h(x),\ket{\psi}$ such that $h(x)\cos(2\varepsilon h(x))$ is nonincreasing over the spectrum of $h$;
we give precise conditions in Appendix~\ref{app:analysis}.)
The main idea is illustrated in Figure~\ref{fig:cost_expectation_epsilon_sketch}. 
Let $\langle H\rangle_{\psi'} (\varepsilon)$ denote the (normalized) cost expectation as a function of the parameter $\varepsilon=\varepsilon(B)=\frac{\pi}{4B}$. In Appendix~\ref{app:analysis} we show 
\begin{equation} \label{eq:dHde}
   \frac{d}{d\varepsilon}\langle H\rangle_{\psi'} (\varepsilon)|_{\varepsilon=\varepsilon(h^*)} \; \leq \; 0 \; 
   \leq \; \frac{d}{d\varepsilon}\langle H\rangle_{\psi'} (\varepsilon)|_{\varepsilon \rightarrow 0},  
\end{equation}
for most cost Hamiltonians $H$ and and states $\ket{\psi}$, 
with strict inequalities whenever $\ket{\psi}$ contains
more than one cost component. 
Hence in this case $\langle H\rangle_{\psi'} (\varepsilon)$ is maximized at some point between the interval boundaries. 
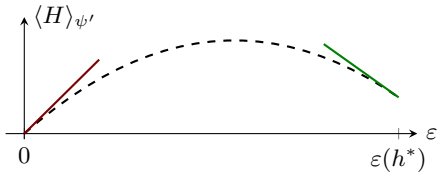
\begin{figure}[htb]
    \begin{center}
        \externalize{cost_expectation_epsilon_sketch}{\begin{tikzpicture}
\begin{axis}[width=7cm,
        height=3.2cm,
        xmin=0.0,
        xmax=1.0,
        xtick={0, 1},
        xticklabels={$0$, $\varepsilon(h^*)$},
        xlabel={$\varepsilon$},
        ymin=0.0,
        ymax=1.5,
        ylabel={$\langle H \rangle_{\psi'}$},
        ytick=\empty,
        hide obscured x ticks=false,
        grid=none,
        enlargelimits,
        axis x line=center,
        axis y line=center,
        enlarge x limits=0.05,
        enlarge y limits=0.05,
        every axis x label/.style={at={(current axis.right of origin)},anchor=west},
        every axis y label/.style={at={(current axis.above origin)},anchor=west}
    ]

    \addplot[name path=G,black,thick,dashed,domain={0:1}]{-4*(x - 0.5)^2 + 1.0 + 0.7*0.7 * x};
    \addplot[name path=G,black!50!red,thick,domain={0.0:0.2}]{5*x};
    \addplot[name path=G,black!50!green,thick,domain={0.8:1.0}]{-3.6*(x - 1) + 0.49};
\end{axis}
\end{tikzpicture}

         }
    \end{center}
    \caption{Illustration of the dependence of the cost expectation value $\langle H \rangle_{\psi'}$ on the choice of the bounds of the spectrum and hence the parameter $\varepsilon$ as per Equation~\eqref{eq:dHde}. For typical cost Hamiltonians the slope is negative at $\varepsilon=\varepsilon(h^*)$ which implies $\langle H \rangle_{\psi'}$ is maximized by some value $\varepsilon\in(0,\varepsilon(h^*))$.
}
    \label{fig:cost_expectation_epsilon_sketch}
\end{figure}
The exact value of optimal $\varepsilon$ is seen to depend on $H$ as well as $\ket{\psi}$. 
For any $H,\ket{\psi}$ outside this case, the optimal value with respect to the specified interval is $\varepsilon(h^*)$. 

In the example of Figure \ref{fig:results_mdqo} the utilization of a 
suboptimal cost bound (and hence smaller choice of $\varepsilon$) indeed improves the expected cost as compared to the tight bound, and this effect is seen to favorably compound over sequences of successful weak measurements. 
The situation for the 
success probability $p_1$ per subsequent step is more subtle.  

Observe that with fixed bound $t>h^*$, for any $\ket{\psi}$ we have $p_1 \leq p_1(\ket{h^*})$. Combining with Eq.~\eqref{eq:probLowerBound} this yields that for any state
\begin{equation}
    \frac12 + \frac12\frac{\langle H\rangle_\psi}{t} \leq \; p_1 \; \leq \frac12 + \frac12 \sin (\frac\pi2 \frac{h^*}{t}).  
\end{equation}
which we observe to be strictly less than $1$ due the the effect of $t$. 
At the same time we see that even with loose cost bounds we always retain the important guarantee that $p_1>1/2$ for superposition states of different cost. Furthmore the conclusion of Remark~\ref{rem:positive-feedback} remains valid. 

\subsection{Measurement-enhanced quantum ansatz} 
\label{sec:postproc}
The preceding analysis shows the reliance 
of the success probability and performance improvement of each weak measurement step
on the initial state. 
Many quantum algorithms for optimization are effectively  
state preparation algorithms, which after a parameter training phase are repeatedly prepared and measured. 
Here we elaborate on how our weak measurement protocol may be combined with  existing quantum circuit ansatze 
to give improved performance over the ansatz alone, with the tradeoffs of modest increase in circuit depth and repetitions; modest in the sense that only a single ancilla qubit is required, and gate costs can be commensurate with an additional layer of an ansatz such as QAOA or its many variants. 

Explicitly, we propose using a given quantum ansatz as input state to Algorithms~\ref{alg:mdqo} and \ref{alg:fcmdqo}. 
Weak measurement steps may be applied to relatively shallow or deep quantum circuit ansatze, that have already undergone a potentially expensive parameter training phase. Embedding in our algorithms facilitates the inclusion of additional circuit elements to improve performance over use of the ansatz alone, while avoiding any parameter retraining.  Hence, in terms of resources, those required for the weak measurement steps again scale as described in Sec.~\ref{sec:resource_est}, but with the additional resources required to prepare the quantum ansatz as initial state. 

In particular, from this perspective even a single successful round of our algorithm as post-processing is beneficial. 
Suppose that using 
a suitable quantum ansatz we have prepared $\ket{\psi}$ such that its expected performance is bounded away from the random guessing value $\langle H\rangle_\psi > \langle H\rangle_0$, which implies $\langle C\rangle_\psi > \langle C\rangle_0$.  
For QAOA in particular it is always possible to find and create such a state~\cite{hadfield2021analytical}. 
Then from \eqref{eq:probLowerBound} we have
 $$p_1 \geq \frac12 + \frac2{\pi} \langle C\rangle_\psi 
 > \frac12 + \frac2{\pi}  \langle C \rangle_0  > 1/2,$$
 such that we see the better the input ansatz state the higher the initial success probability. 
Hence sampling from the improved target distribution introduces an overhead of at most $1/p_1$.
For a single weak-measurement layer 
 this factor is at most $2$.

\paragraph*{Numerical demonstration:} Here we consider application of Algorithm~\ref{alg:mdqo} using the optimized depth-1 QAOA ansatz as input state. Explicitly, for MaxCut we take 
\begin{equation}
    \ket{\psi_0} = \ee^{-\ii \gamma H} \ee^{-\ii \beta \sum_{u\in V} X_u}\ket{+}^{\otimes |V|} \; ,
    \label{eq:qaoa_state}
\end{equation}
with the cost Hamiltonian~$H$ of~\eqref{eq:max_cut_hamiltonian}, and 
angles $\gamma, \beta$ selected as to maximize $\langle H \rangle_{\psi_0}$. 

\begin{figure}[htb]
    \begin{center}
        \externalize{results_mdqo_postproc}{\pgfplotstableread[col sep=comma]{./figures/tikz/data/results_mdqo_postproc.csv}{\expval}

\pgfplotstablegetcolsof{\expval}
\pgfmathsetmacro{\numcols}{\pgfplotsretval}

\pgfmathsetmacro{\lastycol}{\numcols - 1}

\pgfplotscreateplotcyclelist{color list}{
  {gray, fill=gray},
  {black, fill=black},
  {sbblue, fill=sbblue},
  {sbred, fill=sbred},
  {sbgreen, fill=sbgreen},
  {sbpurple, fill=sbpurple},
}
\begin{tikzpicture}[trim left=-1.7em]
    \begin{axis}[
        ybar,
        bar width=2,
        width=8cm,
        height=5.5cm,
        tick style={draw=none}, ymax=0.60,
        ytick={0, 0.2, 0.4, 0.6},
        enlarge x limits=0.10,
        enlarge y limits=0.05,
        xlabel={Cost $h$},
        ylabel={Probability $p(h)$},
        legend style={draw=none,
opacity=0.8,
            inner sep=1pt,
            legend columns=2,
            column sep=0.1ex,
            at={(0.01,0.99)},
            anchor=north west,
        },
grid=both,
        cycle list name=color list,
        axis background/.style={fill=sbbackground},
        axis line style={draw=none},
        grid style={color=white},
    ]

\foreach \i in {1,...,\lastycol} {

\pgfplotstablegetcolumnnamebyindex{\i}\of{\expval}\to\pgfplotsretval
\addplot table [
            x index=0,
            y index=\i,
        ] {\expval};
        \addlegendentryexpanded{\pgfplotsretval}
    }
    \end{axis}
\end{tikzpicture}
         }
    \end{center}
    \caption{\label{fig:results_mdqo_postproc}
        Cost probability density $p(l)$ 
        of Equation~\eqref{eq:costProbDensity} for the MaxCut instance of 
        Figure~\ref{fig:results_mdqo}, with respect to various different states: The uniform random guessing state $\ket{+}^{\otimes |V|}$ (gray), the QAOA$_1$ state by itself~\eqref{eq:qaoa_state} (black), and followed by Algorithm~\ref{alg:mdqo} with several values of $k_1$, with $k_0=0$ here.
        }
\end{figure}
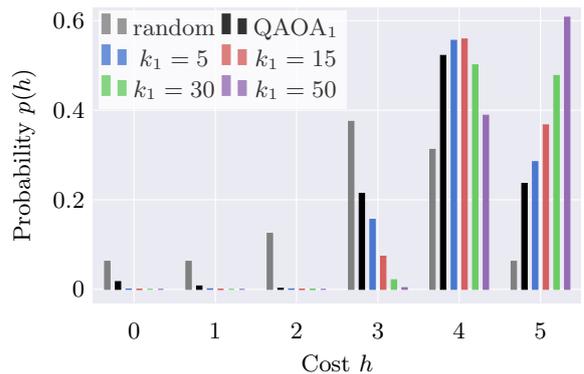
\begin{table}[h]
    \begin{center}
        \pgfplotstabletypeset[
            every head row/.style={
                before row=\toprule,after row=\midrule},
            every last row/.style={
                after row=\bottomrule},
            col sep=comma,
            ]
            {./figures/tikz/data/results_mdqo_postproc_exp_value.csv} 
    \end{center}
    \caption{Cost expectation value $\langle H \rangle$ 
    for various different states as in Figure~\ref{fig:results_mdqo_postproc}.
    }\label{tab:results_mdqo_postproc}
\end{table}

Figure~\ref{fig:results_mdqo_postproc} shows the resulting probability distributions over solution costs 
\begin{equation} \label{eq:costProbDensity}
    p(l) := \sum_{\mathbf{x}: h(\mathbf{x})=l} \left| \braket{\psi }{\mathbf{x}}\right|^2
\end{equation}
for the same MaxCut instance as in Figure~\ref{fig:results_mdqo}, over several possible outcomes of Algorithm~\ref{alg:mdqo}.
As expected the QAOA state improves on random guessing and each successful weak-measurement step in Algorithm~\ref{alg:mdqo} pushes the probability distribution further towards the optimal value (which is $h=5$ here), such that the expected approximation ratio achieved is improved (see Table~\ref{tab:results_mdqo_postproc}). 

\subsection{Algorithm Analysis: Multi-Step} \label{sec:multistep}
We now consider the more general case of multiple weak-measurement steps of Algorithm 1. 
Based on the analysis of the previous section we consider target criterion for Algorithm~\ref{alg:mdqo} of success at a significant fraction of steps, which corresponds to favorable amplitude modulation of high cost states. 
We give several results showing that the conditions on the success probability derived above lead to moderately scaling number of overall repetitions to achieve multi-step runs with a target fraction of successes achieved. 

Recall the results of Section~\ref{sec:mdqo} for some fixed input state $\ket{\psi}$,  
with the state $\ket{\psi_{k_0,k_1}}$ resulting from $k=k_0+k_1$ weak measurement steps given in Equation~\eqref{eq:ent-meas-state}. 
Observe that the success probability condition $p_1>1/2$ guarantees that for any sequence of $k$ weak measurement steps, we expect at least $k/2$ successful outcomes, i.e., the probability of more successes than failures is at least $1/2$. 
At the same time, from the positive feedback effect described in 
Remark~\ref{rem:positive-feedback} we expect still much better performance than this lower bound suggests. 
We next use our results to bound the expected number of steps to achieve a given difference between successes and failures.

\begin{prop} \label{lem:allSucc}
Consider a sequence of weak measurement steps of Algorithm 1 applied to a state $\ket{\psi}$, with initial success probability $p_1$ given by Eq.~\eqref{eq:succProb}, where we reset the count and state whenever a failure outcome occurs. Then the expected number of overall steps $K$ to obtain a sequence of $k_1=L$ successes is at most
\begin{equation} \label{eq:allSucc}
    E[K]\leq \frac{1-p_1^{L}}{(1-p_1)p_1^{L}}.
\end{equation}    
\end{prop}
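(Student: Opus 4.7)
The plan is to reduce the claim to an elementary Bernoulli-process calculation by exploiting the positive feedback property established in Eq.~\eqref{eq:probIncrease}. First, I would observe that along any uninterrupted streak of successes starting from $\ket{\psi}$, the success probability is monotonically nondecreasing: iterating the previous proposition, if $\ket{\psi_j}$ is the state after $j$ consecutive successes applied to $\ket{\psi}$, then
\begin{equation}
p_1(\ket{\psi_j}) \, \geq \, p_1(\ket{\psi_{j-1}}) \, \geq \, \ldots \, \geq \, p_1(\ket{\psi}) = p_1.
\end{equation}
After a failure we reset both the count and the state to $\ket{\psi}$ by hypothesis, so again the post-reset success probability equals $p_1$. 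Hence at every step of the procedure the success probability is bounded below by $p_1$.

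Next, let $E_j$ denote the expected number of remaining steps until a streak of $L$ consecutive successes is obtained, given that the current streak has length $j$, and let $q_j \geq p_1$ denote the success probability of the next step in that situation. First-step analysis yields $E_L = 0$ and
\begin{equation}
E_j \;=\; 1 + q_j\, E_{j+1} + (1 - q_j)\, E_0 \qquad (0 \leq j < L).
\end{equation}
Setting $\Delta_j := E_j - E_0$, so $\Delta_0 = 0$ and $\Delta_L = -E_0$, the recurrence simplifies to $\Delta_j = 1 + q_j\, \Delta_{j+1}$, which telescopes to
\begin{equation}
E_0 \;=\; -\Delta_L \;=\; \sum_{i=0}^{L-1} \frac{1}{q_i\, q_{i+1} \cdots q_{L-1}}.
\end{equation}

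Finally, using $q_i \geq p_1$ in each factor of the $i$-th denominator gives $\frac{1}{q_i \cdots q_{L-1}} \leq p_1^{-(L-i)}$, and summing the resulting geometric series yields
\begin{equation}
E[K] \;=\; E_0 \;\leq\; \sum_{k=1}^{L} p_1^{-k} \;=\; \frac{1 - p_1^{L}}{(1 - p_1)\, p_1^{L}},
\end{equation}
which is the claimed bound. The main subtlety lies in handling the history-dependence of the per-step success probabilities: an alternative would be a monotone coupling against an i.i.d.\ Bernoulli($p_1$) benchmark, but the explicit telescoping above appears cleaner and avoids invoking a separate coupling lemma. Once the monotonicity property above is in place, the remainder is a routine geometric-sum estimate.
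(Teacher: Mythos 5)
Your proof is correct and follows the same strategy as the paper: use the monotonicity of the success probability along a streak of successes (Eq.~\eqref{eq:probIncrease}, iterated, plus the reset of the state after each failure) to compare against the fixed-probability benchmark, whose expected hitting time is the standard formula $\frac{1-p_1^{L}}{(1-p_1)p_1^{L}}$. Your explicit first-step-analysis recurrence and the telescoped identity $E_0=\sum_{i=0}^{L-1}\left(q_i q_{i+1}\cdots q_{L-1}\right)^{-1}$ in fact make rigorous the comparison step that the paper merely asserts (each summand is decreasing in each $q_j$, so substituting the lower bound $q_j\geq p_1$ yields the geometric-sum upper bound), so your write-up is, if anything, more complete than the paper's two-sentence argument.
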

\begin{proof}
    As shown a sequence of successful weak measurements can only increase $p_1$. Hence the expected number of trials is at most the value corresponding to the case where $p_1$ remains fixed, which is the right hand side of \eqref{eq:allSucc}, and is a standard formula corresponding to a run of $k_1$ heads of a biased coin.
\end{proof}
For $p_1 \rightarrow 1$ this gives $E[K]=L$ as expected. Clearly $E[K]\geq L$ always. 
If one has an upperbound~$p$ to the single step success probability after $L$ successes (which can be derived explicitly from Eq.~\ref{eq:succProb}), 
then the same expression as on the righthand side of Equation~\eqref{eq:allSucc} with $p$ instead of $p_1$ gives a lower bound for $E[k]$.
For $p_1\leq p < 1$ this implies 
$E[K]$ grows exponentially with $L$.

When $p_1$ is initially fairly close to $1$, or when $L$ is moderate, this overhead is modest and so we can strictly require $L$ successes, i.e. obtain the state corresponding to $(k_0,k_1)=(0,L)$. When we lack such guarantees we can instead seek to create states with high ratios of successes to failures. 

\begin{prop} \label{lem:expecRep}
    Consider a sequence of weak measurement steps, which produces $k_0,k_1$ many $0,1$ outcomes, respectively.
Let the probability of success at each step be at least some fixed $p>1/2$. 
Then the expected number of overall steps $k_L$ to reach a fixed value $k_1-k_0=L$ is at most 
\begin{equation}
  E[k_L]\leq \frac{L}{2p-1}  .
\end{equation}
If $p>1/2+c$ for some constant $c$ then $k_L=O(L)$.
\end{prop}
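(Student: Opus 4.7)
The proposition concerns the expected first-passage time to $k_1 - k_0 = L$ for a biased random walk whose per-step success probability is at least $p > 1/2$, so my plan is to exploit a standard drift/submartingale argument from probability theory.

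First I would introduce the stochastic process $S_k := k_1 - k_0$ tallying the cumulative surplus of successes after $k$ weak measurement steps, with $S_0 = 0$. Each step changes $S_k$ by exactly $\pm 1$, and the per-step probability of $+1$ conditional on the history equals the success probability $p_1(\ket{\psi_k})$ of the current state, which by hypothesis satisfies $p_1(\ket{\psi_k}) \geq p > 1/2$. The target time is the stopping time $\tau := \min\{k : S_k = L\}$, which is exactly $k_L$.

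Next I would verify that the compensated process $M_k := S_k - k(2p-1)$ is a submartingale with respect to the natural filtration, since
\begin{equation}
E[S_{k+1} - S_k \mid \mathcal{F}_k] = 2p_1(\ket{\psi_k}) - 1 \geq 2p - 1.
\end{equation}
Because the drift $2p-1$ is strictly positive, standard random-walk theory gives $\tau < \infty$ almost surely with finite expectation. Applying optional stopping to the bounded stopped process $M_{k \wedge n}$ and invoking monotone convergence as $n \to \infty$ yields
\begin{equation}
0 = E[M_0] \leq E[M_\tau] = E[S_\tau] - (2p - 1)\, E[\tau].
\end{equation}
Since the increments are $\pm 1$ and the walk starts at $0$, we have $S_\tau = L$ exactly (the walk cannot overshoot), so $E[S_\tau] = L$ and rearranging gives $E[k_L] = E[\tau] \leq L/(2p-1)$.

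The final statement follows by plugging $p \geq 1/2 + c$ into this bound, which gives $E[k_L] \leq L/(2c) = O(L)$. The only mild technical subtlety is justifying the optional stopping interchange when $\tau$ is unbounded; this is routine because the positive drift guarantees $E[\tau]$ is finite and the stopped process $S_{k\wedge \tau}$ is bounded above by $L$, so a truncation-plus-monotone-convergence argument suffices. A cleaner alternative, if a simpler presentation is desired, is a coupling: the process $S_k$ stochastically dominates the analogous simple random walk with fixed success probability exactly $p$, for which the $L/(2p-1)$ bound on the hitting-time expectation is classical (e.g., via Wald's identity applied to the i.i.d.\ $\pm 1$ increments with mean $2p-1$).
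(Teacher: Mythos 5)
Your proof is correct and rests on the same core idea as the paper's: the surplus $S_k = k_1 - k_0$ has conditional drift at least $2p-1$ per step, so the expected hitting time of level $L$ is at most $L/(2p-1)$. The difference is one of rigor rather than of strategy. The paper argues informally: it computes $E[k_1-k_0] \geq (2p-1)k$ after $k$ steps and then asserts that the expected number of trials to reach $L$ is therefore at most $L/(2p-1)$; that final inference---from expected displacement at a fixed time to the expectation of a random hitting time---is precisely the step that requires Wald's identity or optional stopping, and the paper leaves it implicit. Your submartingale argument supplies it: $M_k = S_k - k(2p-1)$ is a submartingale, optional stopping applied to $M_{\tau\wedge n}$ together with $S_{\tau\wedge n}\le L$ (no overshoot, since increments are $\pm 1$) gives $E[\tau\wedge n]\le L/(2p-1)$ uniformly in $n$, and monotone convergence finishes the job, simultaneously establishing $E[\tau]<\infty$ without a separate argument. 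Your coupling/stochastic-domination alternative is also sound and mirrors the paper's \enquote{pretend $p$ is fixed} reduction. In short, you prove the same bound by the same drift heuristic, but your martingale formalization closes the small logical gap that the paper glides over.
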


\begin{proof}
    Pretend for the moment that the probability of success $p$ is fixed at each step, and let $q=1-p$.  
For $k$ steps we have 
$E[k_1] = pk$ and $E[k_0] = qk$ which gives 
$$E[k_1-k_0] = (p-q)k = (2p-1)k,$$
and so the expected number of trials $k_L$ to reach a fixed value $k_1-k_0=L$ is 
$k_L=\frac{L}{2p-1}$.

With the equal probability assumption relaxed as in our algorithm we then have
$$E[k_1-k_0] = \sum_{i=1}^k (p_i-q_i) = \sum_{i=1}^k (2p_i-1) \geq (2p-1)k$$
from which the claim follows. Finally, plugging in $p=1/2+c$ directly gives $k_L=O(L)$.
\end{proof}

Now consider the case of incorporating a reset condition such that 
if ever $k_1 - k_0$ 
reaches a fixed threshold $-R$ 
we reset the count, restore the initial quantum state, and restart the process. As expected including the reset feature is seen to decrease the expected number of trials to achieve a given target $L$. 

\begin{lemma} \label{lem:algWithReset}
    Consider a sequence of weak measurement steps, which produces $k_0,k_1$ many $0,1$ outcomes, respectively. 
    Additionally assume we reset the process and counts if ever $k_1-k_0 = - R$ for some $R\in\mathbb{N}$.
    Let the probability of success at each step be at least $p>1/2$, with $q:=1-p$. 
    Then the expected number of overall steps $k_L$ to reach a fixed value $k_1-k_0=L$ is at most 
    \begin{equation} \label{eq:trials_w_reset}
        E[k_L]\leq\; \frac{L}{2p-1} -  \frac{Rq^R}{(p-q)(p^R-q^R)}\left(1-(p/q)^L\right) 
    \end{equation}
    which satisfies $k_L \leq \frac{L}{2p-1}$ for all $R$.
\end{lemma}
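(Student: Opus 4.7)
The plan is to recognize the process as a biased random walk with a \emph{reset} absorbing boundary. Let $S_t := k_1^{(t)} - k_0^{(t)}$, which evolves on the integers starting from $S_0 = 0$, stepping $+1$ with probability $p_1 \geq p$ and $-1$ with probability $1 - p_1 \leq q$, hitting an absorbing target at $L$ and resetting to $0$ upon reaching $-R$. First I would reduce to the homogeneous case where every step has success probability exactly $p$. This is done by a pathwise coupling: at each step draw $U_t \sim \mathrm{Unif}[0,1]$, let the actual walk step $+1$ iff $U_t \leq p_1$ and let a reference walk step $+1$ iff $U_t \leq p$. Since $p_1 \geq p$, the actual walk dominates the reference walk at every step, so the hitting time to $L$ (and to the reset level $-R$) is stochastically no larger for the actual walk than for the reference walk. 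Hence it suffices to prove the bound assuming $p_1 = p$ identically.

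Next, I would decompose the reference trajectory into independent i.i.d.\ \emph{cycles}, each starting at $0$ and ending when $S_t$ first reaches $L$ (success) or $-R$ (reset). Let $P_L$ be the probability a cycle succeeds and $E[\tau]$ the expected cycle length. Since failed cycles simply restart, the number of cycles until success is geometric with parameter $P_L$, giving the clean relation
\begin{equation}
    E[k_L] \;=\; \frac{E[\tau]}{P_L}.
\end{equation}
Classical gambler's-ruin formulas (shifting coordinates so the walk lives on $\{0,\dots,L+R\}$ starting at $R$) then yield
\begin{equation}
    P_L \;=\; \frac{1-(q/p)^R}{1-(q/p)^{L+R}}, \qquad
    E[\tau] \;=\; \frac{1}{q-p}\bigl(R - (L+R)P_L\bigr).
\end{equation}

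Substituting gives the compact form $E[k_L] = \tfrac{L+R}{p-q} - \tfrac{R}{(p-q)P_L}$, and clearing $P_L$ by multiplying through by $p^{L+R}$ lets one re-express the result in terms of $p^R, q^R, p^L, q^L$ to match Eq.~\eqref{eq:trials_w_reset}. The second assertion $E[k_L] \leq L/(2p-1)$ then drops out with essentially no work: because $P_L \leq 1$, we have $\tfrac{R}{(p-q)P_L} \geq \tfrac{R}{p-q}$, and hence
\begin{equation}
    E[k_L] \;\leq\; \frac{L+R}{p-q} - \frac{R}{p-q} \;=\; \frac{L}{2p-1}.
\end{equation}

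The main obstacle is the reduction step: $p_1$ depends on the full history of outcomes through the current quantum state, so it is not obvious a priori that one can freeze it at the worst-case value $p$ without losing the reset structure. The pathwise coupling above sidesteps this, because it compares the two processes step-by-step using the same external randomness and never references the internal quantum state beyond the stated inequality $p_1 \geq p$. Everything after that is routine algebra on the two-boundary biased-random-walk formulas.
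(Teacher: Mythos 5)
Your reduction to the constant-$p$ case is the step that fails, and it is the crux of the lemma. The monotone coupling keeps $S_t \ge \tilde S_t$ only until the reference walk first resets: if the reference walk hits $-R$ and teleports to $0$ while the actual walk sits at some $j$ with $-R < j < 0$, the ordering reverses and nothing restores it. This is not a repairable technicality in the coupling --- the statement it is meant to establish is false. Near the reset boundary a \emph{down}-step is advantageous (from $-R+1$ a down-step lands you at $0$ via the teleport, whereas an up-step lands you at $-R+2<0$ when $R>2$), so raising the up-probability there can \emph{increase} the expected hitting time. Concretely, take $R=3$, $L=1$, $p=2/3$: the homogeneous walk has $E_0 = 33/14 \approx 2.357$, but the walk that is identical except that it steps up with probability $1$ from state $-2$ (so every step probability is still $\ge p$) has $E_0 = 5/2 > 33/14$. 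Hence the displayed inequality \eqref{eq:trials_w_reset} cannot be deduced from ``each step succeeds with probability at least $p$'' alone; only the weaker conclusion $E[k_L]\le L/(2p-1)$ survives for history-dependent success probabilities, via optional stopping on the displacement as in Lemma~\ref{lem:expecRep} (at the hitting time, $L = E[S_T] \ge (2p-1)E[T] + R\,E[\#\text{resets}] \ge (2p-1)E[T]$). To be fair, the paper's own proof waves at exactly the same reduction (``as the success probability at each step is $p$ or better we have the bound as claimed'') without justifying it, so your attempt is no less rigorous than the original; but the coupling as stated does not close this step.

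The rest of your argument is correct and takes a genuinely different route from the paper for the homogeneous case: the paper solves the boundary-value difference equation $E_j = 1 + pE_{j+1} + qE_{j-1}$ with $E_L=0$, $E_{-R}=E_0$, whereas you decompose the walk into i.i.d.\ gambler's-ruin cycles and apply Wald's identity, $E[k_L]=E[\tau]/P_L$. Your closed form $E[k_L]=\tfrac{L+R}{p-q}-\tfrac{R}{(p-q)P_L}$ makes the final bound immediate from $P_L\le 1$, which is cleaner than the paper's ``easy to check.'' Note also that carrying your algebra through shows the factor in Eq.~\eqref{eq:trials_w_reset} should read $1-(q/p)^L$ rather than $1-(p/q)^L$; as printed, the correction term has the wrong sign and would contradict the bound $E[k_L]\le L/(2p-1)$ (check $L=R=1$, $p=2/3$: the true answer is $3/2$, your formula gives $3/2$, the printed formula gives $6$).
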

The proof is similar to but more general than that of Lemma~\ref{lem:expecRep} and we give the details in Appendix~\ref{app:analysis}.

Observe that for each value of $R$ have $k_L \rightarrow L$ as $p\rightarrow 1$. The case $R\rightarrow\infty$ (i.e., no reset) reproduces the result of Lemma~\ref{lem:expecRep}, which shows that adding reset is always beneficial, as expected.
\begin{cor}
    For the case $R=1$ we have 
    \begin{equation} \label{eq:trials_w_reset1}
     E[k_L]\leq\; \frac{L}{(p-q)} -  \frac{q}{(p-q)^2}\left(1-(p/q)^L\right) 
    \end{equation}
\end{cor}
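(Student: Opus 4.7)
The plan is to derive the corollary as an immediate specialization of Lemma~\ref{lem:algWithReset}, obtained by substituting $R=1$ into Equation~\eqref{eq:trials_w_reset}. First I would observe that, since $q=1-p$, we have the simple identity $2p-1 = p - q$, so the leading term $\tfrac{L}{2p-1}$ in \eqref{eq:trials_w_reset} transforms directly into $\tfrac{L}{p-q}$, matching the first term of~\eqref{eq:trials_w_reset1}.

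Second, I would evaluate the prefactor of the correction term at $R=1$. With $R=1$ the numerator becomes $Rq^R = q$, while the factor $p^R - q^R$ in the denominator collapses to $p-q$. Combined with the external $(p-q)$, the full denominator becomes $(p-q)^2$, so the second summand in~\eqref{eq:trials_w_reset} reduces to $-\tfrac{q}{(p-q)^2}\bigl(1-(p/q)^L\bigr)$, which is exactly the second term of~\eqref{eq:trials_w_reset1}. Putting the two simplifications together yields the claimed inequality.

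There is essentially no obstacle here: the corollary is a routine specialization, so the only thing to verify carefully is the algebraic simplification of the prefactor. One could also include a brief sanity check noting that the sign of the correction is nonpositive (since $p>q$ implies $(p/q)^L \geq 1$), which is consistent with the general observation following Lemma~\ref{lem:algWithReset} that adding a reset can only decrease the expected number of steps relative to the bound $\tfrac{L}{2p-1}$ obtained without reset in Proposition~\ref{lem:expecRep}.
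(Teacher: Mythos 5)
Your derivation is correct and is exactly what the paper does: the corollary is stated without proof as the direct substitution $R=1$ into Eq.~\eqref{eq:trials_w_reset}, using $2p-1=p-q$, $Rq^R=q$, and $p^R-q^R=p-q$. One caution about your closing sanity check: with $p>q$ we have $(p/q)^L\geq 1$, hence $1-(p/q)^L\leq 0$, so the second summand $-\tfrac{q}{(p-q)^2}\bigl(1-(p/q)^L\bigr)$ is \emph{nonnegative} as written, not nonpositive --- so your check as stated does not actually confirm that the bound sits below $\tfrac{L}{2p-1}$ (this sign subtlety is inherited from the form of Eq.~\eqref{eq:trials_w_reset} itself and is worth flagging rather than asserting).
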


The results of this section support and reemphasize that 
by shifting the success probability away from $1/2$, we can 
moderate the resource overhead required to achieve a target ratio of successes and correspondingly improved quantum state. 
 
\section{Application to Constrained Optimization}\label{sec:constrainedOpt}
Here we consider the more general setting of optimization problems that additionally come with a set of hard constraints that must be satisfied. 
Bitstrings and corresponding quantum states satisfying all of the hard constraints are referred to as \textit{feasible}. 
We consider two distinct approaches extending our above results, based on penalty terms and feasible subspace preserving operators, respectively. 
In particular we apply the preceding analysis to argue that the latter approach is particularly advantageous. 
To make these ideas clear we start with a motivating example. 

\paragraph*{Example. Maximum Independent Set (MIS):} 
Given a graph $G=(V,E)$ we seek the largest subset of vertices $S\subset V$, $|V|=n$, such that no two vertices in $S$ are connected by an edge in $E$. 
The vertices in $S$ may be indicated by the bits $x\in\{0,1\}^n$ with simple cost function
\begin{equation}
    H=\sum_{u \in V} x_u = \frac{n}2I-\frac12\sum_{u\in V} Z_u \; ,
    \label{eq:mis_hamiltonian}
\end{equation}
subject to the constraint $P=0$ where 
\begin{equation}  \label{eq:mis_penalty}
P:=\sum_{(u,v)\in E} x_u x_v \; .
\end{equation}

Observe that the spectral norm $\|H\|=n$ on the full Hilbert space, but is reduced to $h^*$ on the subspace of feasible states. Moreover, the constraint term $\|P\|=|E|=O(n^2)$ is significantly larger over the full Hilbert space, while it is identically zero when restricted to the feasible subspace. 
As we shall see, these properties directly lead to significantly different Hamiltonian rescaling parameters between the two 
general 
approaches we consider. 

\subsection{Penalty-based approach} 
In penalty-based approaches, 
one utilizes a penalty operator $P$ that is zero on feasible states and positive on infeasible ones (such as Eq.~\eqref{eq:mis_penalty} for the MIS problem), to give the penalized cost Hamiltonian 
\begin{equation}  \label{eq:mis_hamiltonian_with_penalty}
    H' = H - \lambda P \; ,
\end{equation}
where $\lambda$ is a fixed positive parameter called the (global) \emph{penalty weight}.
Penalty-based constructions are known for a wide variety of NP-hard optimization problems~\cite{lucas2014ising,hadfield2018quantum}.
For example, generally, each hard constraint (or appropriate square thereof) may be added to construct~$P$. The penalty weights are typically chosen large enough such that all infeasible states have penalized cost worse than any feasible state in the (target or full) cost subspace. 

We first propose running Algorithm $1$ directly with penalized cost Hamiltonian $H'$, for fixed penalty weight. 
Then from the perspective of our algorithm the same analysis as before applies, with infeasible states simply those of lowest cost with respect to $H'$,  
and successful outcomes driving states toward higher cost distributions as desired. 

The catch, however, with this approach is that, as is common with penalty-based approaches generically, one has to deal somehow with the possibility of sampling infeasible states. One could always discard such outcomes, post-selecting on feasible solutions, at the expense of an increased number of overall runs. For some problems post-processing may be used to methodically map any obtained infeasible strings to feasible ones. (See \cite[App. D]{brady2023iterative} for such post-processing procedure for MIS.) 

A second and perhaps more consequential catch is that for many such problems the norm of 
the penalty term 
can be much larger than that of the 
unpenalized part $H$, resulting in a 
significant
increase to the spectral range of the cost Hamiltonian $H'$. 
For Algorithm~\ref{alg:mdqo} this results in a significantly reduced~$\varepsilon$ parameter, and hence reduced performance improvement per successful step. 

Indeed, for the MIS problem, if the penalized Hamiltonian $H' = H - \lambda P$ is used, then we require $\varepsilon = O(\frac1{\lambda n^2})$, which is significantly smaller than the $\varepsilon = O(1/n)$ scaling required for $H$ alone. 
Further note that while in some quantum algorithm as-large-as-possible penalty weights appear desirable, in our setting this does not appear to be the case as this can exacerbate  the problem described above.

We next explain how instead employing a feasible-subspace invariant approach avoids the need for penalty terms such that one may retain favorable $\varepsilon$ scaling. 

\subsection{Feasible subspace approach}
Here we consider quantum algorithms with dynamics restricted to stay within the feasible subspace. Any (ideal) measurement in the computational basis is then guaranteed to return a feasible bitstring. 
A prototypical example is the quantum alternating operator ansatz constructions of \cite{hadfield2019quantum}, shown there for a variety of problems with hard constraints.

For such algorithms, 
we assume that all intermediate 
quantum states $\ket{\psi}$ lie solely within the feasible subspace, 
i.e.~$\bra{\psi}P\ket{\psi}=0$, 
including the initial state $\ket{\psi_0}$. 
Due to the necessary 
overlap requirements the initial state  
$\ket{\psi_0}$ should be one with support on target states (recall Remark~\ref{rem:bound-support}  above). 
Hence this approach may be similarly applied a feasibility-preserving quantum circuit, such as instantiations of the quantum alternating operator ansatz, or to simpler initially feasible states, for example easily preparable superpositions of feasible states. See in particular the discussion of possible initial feasible states in \cite{hadfield2019quantum}.

Next observe that the weak measurement operator is diagonal with respect to the computational basis and so it also preserves feasibility, i.e.,
the state after the weak measurement operation $W_b$ remains a superposition of feasible states, whichever the measurement outcome. 
Hence, up to noise, the states in the main register never leave the feasibility subspace under sequences of weak measurements as in Algorithm 1.

Hence, here for constrained problems we propose to simply employ Algorithm~\ref{alg:mdqo} with the unpenalized cost Hamiltonian~$H$, 
but applied to states $\ket{\psi_0}$ guaranteed to lie withing the feasible subspace as input. This approach advantageously avoids the detrimental effects of incorporating penalty terms described above. Moreover, we emphasize it is a much milder requirement than other feasible subspace preserving quantum algorithms. For example, feasibility-preserving variants of QAOA or Quantum Annealing require not only feasible initial states, but relatively complicated dynamical operators in addition. Algorithm~\ref{alg:mdqo} avoids the need for the latter entirely. 

For example, for the weak measurement operator is easily implemented with $n$ many 2-local gates. As explained, by using a feasible state $\ket{\psi_0}$, the rescaling parameter $\epsilon$ can be selected as $O(1/n)$ as opposed to $O(1/n^2)$ which yields immediate improvement over the penalty-based approach. 

\subsection{Numerical Comparison }
In Figure~\ref{fig:results_mis_mdqo} we 
compare penalty-based and feasibilty-preserving approaches 
for Algorithm~\ref{alg:mdqo} applied to MIS on the same example graph instance as in Figure~\ref{fig:results_mdqo}. For the penalty-based approach we choose $\lambda=3$ in Eq.~\eqref{eq:mis_hamiltonian_with_penalty} so that feasible states ($h\geq0$) are separated from the infeasible states ($h\leq 0$), and the spectrum hence spans $-13 \leq h \leq 3$.
The upper bounds for the scaling function \eqref{eq:scaling_bounds}, are chosen to be $t=n=5$ for the loose case, and $t=h^*=3$ for the tight bound for comparison. 
For the lower bound, we get $-s=n-\lambda m = -13$ for the penalty-based approach and $-s=0$ for the invariant feasible subspace approach.

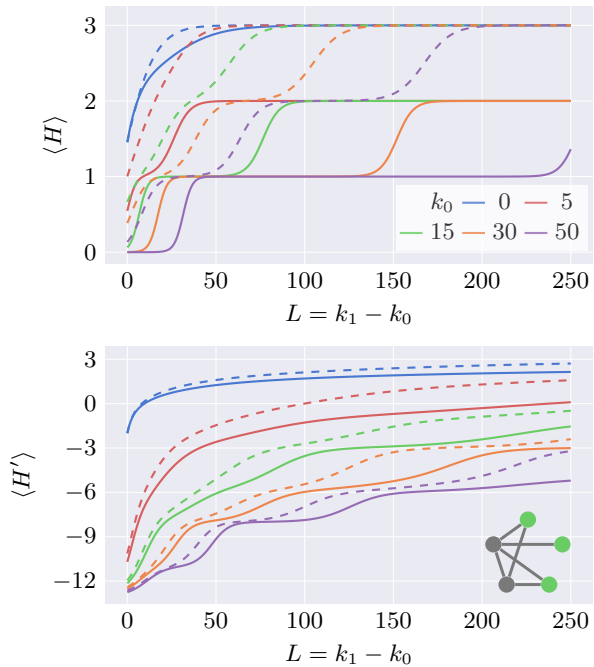
\begin{figure}[htb]
    \begin{center}
        \begin{tikzpicture}
            \node (plot) []{\externalize{results_mis_mdqo}{\pgfplotstableread[col sep=comma]{./figures/tikz/data/results_mis_mdqo_exp_value_best.csv}{\expvalbest}
\pgfplotstableread[col sep=comma]{./figures/tikz/data/results_mis_mdqo_exp_value_worst.csv}{\expvalworst}
\pgfplotstableread[col sep=comma]{./figures/tikz/data/results_mis_mdqo_penalty_exp_value_best.csv}{\expvalpenbest}
\pgfplotstableread[col sep=comma]{./figures/tikz/data/results_mis_mdqo_penalty_exp_value_worst.csv}{\expvalpenworst}

\pgfplotstablegetcolsof{\expvalbest}
\pgfmathsetmacro{\numcols}{\pgfplotsretval}

\pgfmathsetmacro{\lastycol}{\numcols - 1}

\pgfplotscreateplotcyclelist{color list}{
  {sbblue, solid, thick},
  {sbred, solid, thick},
  {sbgreen, solid, thick},
  {sborange, solid, thick},
  {sbpurple, solid, thick},
{sbblue, dashed, thick},
  {sbred, dashed, thick},
  {sbgreen, dashed, thick},
  {sborange, dashed, thick},
  {sbpurple, dashed, thick},
}
\begin{tikzpicture}[trim axis left]
    \begin{groupplot}[
        group style={group size=1 by 2, 
vertical sep=8ex,
            ylabels at=edge left,
        },
    ]
    \nextgroupplot[width=8cm,
        height=5cm,
tick style={draw=none}, ymax=3.1,
        ytick={0, 1, 2, 3},
        enlarge x limits=0.05,
        enlarge y limits=0.05,
        xlabel={$L=k_1 - k_0$},
        ylabel={$\langle H \rangle$},
        legend style={draw=none,
opacity=0.8,
            inner sep=1pt,
            legend columns=3,
            column sep=0.1ex,
            at={(0.99,0.05)},
            anchor=south east,
        },
        legend image post style={xscale=0.5}, grid=both,
        cycle list name=color list,
        axis background/.style={fill=sbbackground},
        axis line style={draw=none},
        grid style={color=white},
    ]

    \addlegendimage{empty legend}
    \addlegendentry[]{$k_0$}
\foreach \i in {1,...,\lastycol} {

\pgfplotstablegetcolumnnamebyindex{\i}\of{\expvalbest}\to\pgfplotsretval
\addplot table [
            x index=0,
            y index=\i,
        ] {\expvalbest};
        \addlegendentryexpanded{\pgfplotsretval}
    }
    \foreach \i in {1,...,\lastycol} {

\pgfplotstablegetcolumnnamebyindex{\i}\of{\expvalworst}\to\pgfplotsretval
\addplot table [
            x index=0,
            y index=\i,
            dashed,
        ] {\expvalworst};
    }
    \nextgroupplot[width=8cm,
        height=5cm,
        ymax=3.1,
        ytick={-12, -9, -6, -3, 0, 3},
tick style={draw=none}, enlarge x limits=0.05,
        enlarge y limits=0.05,
        xlabel={$L=k_1 - k_0$},
        ylabel={$\langle H' \rangle$},
grid=both,
        cycle list name=color list,
        axis background/.style={fill=sbbackground},
        axis line style={draw=none},
        grid style={color=white},
    ]

\foreach \i in {1,...,\lastycol} {

\pgfplotstablegetcolumnnamebyindex{\i}\of{\expvalpenbest}\to\pgfplotsretval
\addplot table [
            x index=0,
            y index=\i,
        ] {\expvalpenbest};
}
    \foreach \i in {1,...,\lastycol} {

\pgfplotstablegetcolumnnamebyindex{\i}\of{\expvalpenworst}\to\pgfplotsretval
\addplot table [
            x index=0,
            y index=\i,
            dashed,
        ] {\expvalpenworst};
    }
    \end{groupplot}
\end{tikzpicture}
                 }
            };
            \node (graph) [left=1.3em of plot.south east,
                yshift=8.0ex,
                anchor=south east,
                scale=0.7,
                ]{\externalize{results_mis_mdqo_graph}{
\tikzset{%
    vertex/.style={%
        circle,
        draw=none,
        minimum width=1ex,
    },
    edge/.style={%
        ultra thick,
        sbgray,
    },
}
\def\radius{5ex}
\begin{tikzpicture}
    \node (1) [vertex, fill=sbgreen] at ({360/5*(1.25)}:\radius) {};
    \node (2) [vertex, fill=sbgray] at ({360/5*(2.25)}:\radius) {};
    \node (3) [vertex, fill=sbgray] at ({360/5*(3.25)}:\radius) {};
    \node (4) [vertex, fill=sbgreen] at ({360/5*(4.25)}:\radius) {};
    \node (5) [vertex, fill=sbgreen] at ({360/5*(5.25)}:\radius) {};

    \draw[edge] (1) -- (2);
    \draw[edge] (2) -- (3);
    \draw[edge] (3) -- (4);
    \draw[edge] (1) -- (3);
    \draw[edge] (2) -- (4);
    \draw[edge] (2) -- (5);

\end{tikzpicture}
                     }
            };
        \end{tikzpicture}
    \end{center}
    \caption{Comparison of invariant feasible subspace approach (top) and penalty-based (bottom) approach:
        Cost expectation value of the MIS Hamiltonian (\eqref{eq:mis_hamiltonian} and~\eqref{eq:mis_hamiltonian_with_penalty}, respectively) for the displayed graph with respect to the state in \eqref{eq:ent-meas-state} against the difference in desired outcomes and undesired outcomes $L=k_1-k_0$ for various fixed numbers of undesired outcomes $k_0$, 
        for the infeasible subspace approach (top) and the penalty-based approach with $\lambda=3$ (bottom).
        The solid lines and dashed lines indicate the tight $t=h^*$ and loose upper bounds $t=n=|V|$, respectively.
        }\label{fig:results_mis_mdqo}
\end{figure}
As anticipated, the invariant feasible subspace approach outperforms the penalty-based approach.
This is mostly due to the broadening of the spectral range in the penalty-based approach which leads to a smaller $\varepsilon$. 
Note that the behavior of the success probability (not shown) is similar to the one in Figure~\ref{fig:results_mdqo}, i.e., the loose bound performs generally better than the tight bound, for both approaches.
In Figure~\ref{fig:results_mis_mdqo_penalty} we present additional results for varying strengths of penalty weights. 
As expected, larger penalty weights are more efficient in suppressing infeasible states and therefore exhibit improved performance compared to smaller penalty weights. 

\begin{figure}[htb]
    \begin{center}
        \begin{tikzpicture}
            \node (plot) []{\externalize{results_mis_mdqo_penalty}{\pgfplotstableread[col sep=comma]{./figures/tikz/data/results_mis_mdqo_exp_value_with_penalty_lambda5.csv}{\expvalwithfirst}
\pgfplotstableread[col sep=comma]{./figures/tikz/data/results_mis_mdqo_exp_value_with_penalty_lambda3.csv}{\expvalwithsecond}
\pgfplotstableread[col sep=comma]{./figures/tikz/data/results_mis_mdqo_exp_value_with_penalty_lambda1.csv}{\expvalwiththird}
\pgfplotstableread[col sep=comma]{./figures/tikz/data/results_mis_mdqo_exp_value_just_penalty_lambda5.csv}{\expvaljustfirst}
\pgfplotstableread[col sep=comma]{./figures/tikz/data/results_mis_mdqo_exp_value_just_penalty_lambda3.csv}{\expvaljustsecond}
\pgfplotstableread[col sep=comma]{./figures/tikz/data/results_mis_mdqo_exp_value_just_penalty_lambda1.csv}{\expvaljustthird}

\pgfplotstablegetcolsof{\expvalwithfirst}
\pgfmathsetmacro{\numcols}{\pgfplotsretval}

\pgfmathsetmacro{\lastycol}{\numcols - 1}

\pgfplotscreateplotcyclelist{color list}{
  {sbblue, solid, thick},
  {sbred, solid, thick},
  {sbgreen, solid, thick},
  {sbblue, densely dashed, thick},
  {sbred, densely dashed, thick},
  {sbgreen, densely dashed, thick},
  {sbblue, densely dotted, thick},
  {sbred, densely dotted, thick},
  {sbgreen, densely dotted, thick},
}
\begin{tikzpicture}[trim axis left]
    \begin{groupplot}[
        group style={group size=1 by 2, 
vertical sep=8ex,
            ylabels at=edge left,
        },
    ]
    \nextgroupplot[width=8cm,
        height=5cm,
tick style={draw=none}, ymax=3.1,
        ytick={-12, -9, -6, -3, 0, 3},
        enlarge x limits=0.05,
        enlarge y limits=0.05,
        xlabel={$L=k_1 - k_0$},
        ylabel={$\langle H - \lambda P \rangle$},
        ylabel shift=-1.5ex,
        legend style={draw=none,
opacity=0.8,
            inner sep=1pt,
            legend columns=2,
            column sep=0.1ex,
            at={(0.99,0.05)},
            anchor=south east,
        },
        legend image post style={xscale=0.5}, grid=both,
        cycle list name=color list,
        axis background/.style={fill=sbbackground},
        axis line style={draw=none},
        grid style={color=white},
    ]

    \addlegendimage{empty legend}
    \addlegendentry[]{$k_0$}
\foreach \i in {1,...,\lastycol} {

\pgfplotstablegetcolumnnamebyindex{\i}\of{\expvalwithfirst}\to\pgfplotsretval
\addplot table [
            x index=0,
            y index=\i,
        ] {\expvalwithfirst};
        \addlegendentryexpanded{\pgfplotsretval}
    }
    \foreach \i in {1,...,\lastycol} {

\pgfplotstablegetcolumnnamebyindex{\i}\of{\expvalwithsecond}\to\pgfplotsretval
\addplot table [
            x index=0,
            y index=\i,
            dashed,
        ] {\expvalwithsecond};
    }
    \foreach \i in {1,...,\lastycol} {

\pgfplotstablegetcolumnnamebyindex{\i}\of{\expvalwiththird}\to\pgfplotsretval
\addplot table [
            x index=0,
            y index=\i,
            dashed,
        ] {\expvalwiththird};
    }
    \nextgroupplot[width=8cm,
        height=5cm,
        ymax=6.1,
        ytick={0, 1, 2, 3, 4, 5, 6},
tick style={draw=none}, enlarge x limits=0.05,
        enlarge y limits=0.05,
        xlabel={$L=k_1 - k_0$},
        ylabel={$\langle P \rangle$},
legend style={draw=none,
opacity=0.8,
            inner sep=1pt,
            legend columns=1,
            column sep=0.1ex,
            at={(0.99,0.99)},
            anchor=north east,
        },
        legend image post style={xscale=0.5}, grid=both,
        cycle list name=color list,
        axis background/.style={fill=sbbackground},
        axis line style={draw=none},
        grid style={color=white},
    ]

    \addlegendimage{solid, thick}
    \addlegendentry[]{$\lambda=5$}
    \addlegendimage{dashed, thick}
    \addlegendentry[]{$\lambda=3$}
    \addlegendimage{densely dotted, thick}
    \addlegendentry[]{$\lambda=1$}
\foreach \i in {1,...,\lastycol} {

\pgfplotstablegetcolumnnamebyindex{\i}\of{\expvaljustfirst}\to\pgfplotsretval
\addplot table [
            x index=0,
            y index=\i,
        ] {\expvaljustfirst};
}
    \foreach \i in {1,...,\lastycol} {

\pgfplotstablegetcolumnnamebyindex{\i}\of{\expvaljustsecond}\to\pgfplotsretval
\addplot table [
            x index=0,
            y index=\i,
            dashed,
        ] {\expvaljustsecond};
    }
    \foreach \i in {1,...,\lastycol} {

\pgfplotstablegetcolumnnamebyindex{\i}\of{\expvaljustthird}\to\pgfplotsretval
\addplot table [
            x index=0,
            y index=\i,
            dashed,
        ] {\expvaljustthird};
    }
    \end{groupplot}
\end{tikzpicture}
                 }
            };
            \node (graph) [right=1.3em of plot.center,
                yshift=8.0ex,
                anchor=south west,
                scale=0.7,
                ]{\externalize{results_mis_mdqo_penalty_graph}{
\tikzset{%
    vertex/.style={%
        circle,
        draw=none,
        minimum width=1ex,
    },
    edge/.style={%
        ultra thick,
        sbgray,
    },
}
\def\radius{5ex}
\begin{tikzpicture}
    \node (1) [vertex, fill=sbgreen] at ({360/5*(1.25)}:\radius) {};
    \node (2) [vertex, fill=sbgray] at ({360/5*(2.25)}:\radius) {};
    \node (3) [vertex, fill=sbgray] at ({360/5*(3.25)}:\radius) {};
    \node (4) [vertex, fill=sbgreen] at ({360/5*(4.25)}:\radius) {};
    \node (5) [vertex, fill=sbgreen] at ({360/5*(5.25)}:\radius) {};

    \draw[edge] (1) -- (2);
    \draw[edge] (2) -- (3);
    \draw[edge] (3) -- (4);
    \draw[edge] (1) -- (3);
    \draw[edge] (2) -- (4);
    \draw[edge] (2) -- (5);

\end{tikzpicture}
                     }
            };
{};
        \end{tikzpicture}
    \end{center}
    \caption{Influence of various penalty weights $\lambda$ onto the performance of the penalty-based:
        Cost expectation value of the full MIS Hamiltonian~\eqref{eq:mis_hamiltonian_with_penalty} (top) and just the penalty term $P$ (bottom) for the displayed graph with respect to the state in \eqref{eq:ent-meas-state} against the difference in desired outcomes and undesired outcomes $L=k_1-k_0$ for various fixed numbers of undesired outcomes $k_0$.
        The solid, dashed and dotted lines are for $\lambda=5, 3, 1$, respectively.
        }\label{fig:results_mis_mdqo_penalty}
\end{figure}
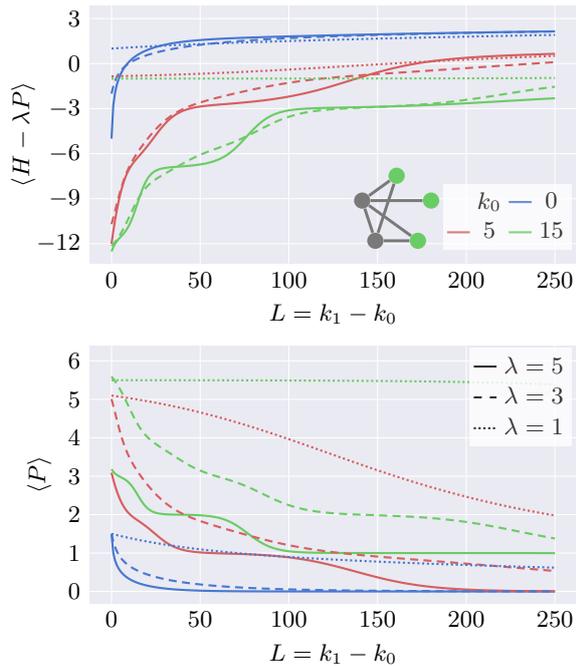
 
\section{Feedback-controlled Measurement-driven Quantum Optimization}\label{sec:fcmdqo}
Here we consider a further generalization of Algorithm~\ref{alg:mdqo} inspired by the approach of \cite{mao2023measurement} as well as existing quantum approaches to optimization such as QAOA~\cite{farhi2014quantum,hadfield2019quantum}. Both the approaches of \cite{mao2023measurement} and Algorithm 2 below additionally assume the ability to perform relatively fast mid-circuit measurement followed by feedback such that the outcomes can be used to dynamically control future gates in the circuit. 
(We emphasize that the control requirements of Algorithm~\ref{alg:mdqo} are significantly weaker.) 
In \cite{mao2023measurement}, as the specific goal considered was obtaining the ground state of a quantum Hamiltonian, additional steps were required to ensure the repeated weak measurement algorithm did not get stuck at a suboptimal eigenstate. In particular, after each weak measurement step a generic \emph{scrambling} operator is conditionally applied based on 
a classical function of 
the sequence of intermediate measurement outcomes obtained so far. We import a similar extension to our weak measurement approach,  motivated by the desire to escape potential configurations where Algorithm~\ref{alg:mdqo} could become stuck or suffer from slow improvement. For this we directly leverage the theory of mixing operators from QAOA, which serve a similar purpose algorithmically, and for which a variety of different operators suitable to different types of problems have been proposed in the literature. In particular we leverage the feasible-subspace-preserving constructions of \cite{hadfield2019quantum} to show how penalty terms and their overhead can be again advantageously avoided for many constrained optimization problems. 
Hence our approach is immediately applicable to both constrained and unconstrained settings. 

Algorithm~\ref{alg:fcmdqo} is referred to as  \ac{algo} and is depicted in Figure~\ref{fig:flow-fcmdqo}.
\begin{algorithm}[H]
    \textbf{Input}
\begin{itemize}[leftmargin=2em]
    \item 
        Problem Hamiltonian $H$ encoding a pseudo-boolean cost function $h(x)$ to maximize, possibly with respect to a set of hard constraints that must be satisfied
    \item
        Rescaled cost Hamiltonian 
        $0\leq C:=\mathcal{E}(H) \leq \frac{\pi}4$, possibly including penalty terms
    \item 
        Initial $n$-qubit state  $\ket{\psi_0}$, 
        may or may not be guaranteed to lie in feasible subspace
    \item 
        Scrambling unitary $U_\mathrm{S}$ (see Sec.~\ref{sec:fcmdqo-scrambling}) 
    \item 
        Return criteria 
        $\mathcal{R}(\mathbf{b},\dots)$
        depending on the measurement outcomes $\mathbf{b}$ 
        and other parameters
    \item 
        Scrambling criteria 
        $\mathcal{S}(\mathbf{b}, \dots)$
        depending on the measurement outcomes $\mathbf{b}$ 
        and other parameters
\end{itemize}
\textbf{Output} 
\begin{itemize}[leftmargin=2em]
    \item 
        Bitstring giving approximate solution to $h(\mathbf{x})$
\end{itemize}
\begin{algorithmic}[1]
    \State $\ket{\psi} \gets \ket{\psi_0}$ 
    \State $\mathbf{b} \gets \emptyset$
    \While{$\neg \; \mathcal{R}(\mathbf{b},\dots)$} 
        \State Introduce fresh ancilla qubit in $\ket{+}$ state
            \label{alg:line:fcmdqo_fresh_ancilla}
        \State 
            Apply entangling gate 
            $\ket{\mathbf{\Psi}} \gets \ee^{-\ii C \otimes Y}\ket{\psi} \otimes \ket{+}$
            \label{alg:line:fcmdqo_entangling}
        \State 
            \begin{varwidth}[t]{0.9\textwidth}
                Measure ancilla qubit in computational basis with outcome $b \in \{0, 1\}$ \label{alg:fcmdqo:line:measurement}
            \end{varwidth}
            \label{alg:line:fcmdqo_measurement}
        \State 
            \begin{varwidth}[t]{0.8\textwidth}
                Increment measurement outcome counters\\ $k_a = k_a + \delta_{ab}$, for $a \in \{0, 1\}$
            \end{varwidth}
        \If{$\mathcal{S}(\mathbf{b}, \dots)$
        } 
            \State 
                Apply scrambling operation 
                $\ket{\psi} \gets U_\mathrm{S} \ket{\psi}$
                \label{alg:line:fcmdqo_scrambling}
            \State 
                Reset 
                outcome count $(k_0, k_1) \gets (0, 0)$
            \label{alg:line:fcmdqo_reset}
        \EndIf
    \EndWhile 
    \State 
        \begin{varwidth}[t]{0.9\textwidth}
            Measure $\ket{\psi}$ in computational basis and return resulting $n$-bit string
        \end{varwidth}
\end{algorithmic}
     \caption{\acf{algo}}\label{alg:fcmdqo}
\end{algorithm}
\begin{figure}
    \begin{center}
        \externalize{flow_chart_fcmdqo}{\tikzset{base/.style={draw=none,
        align=center,
        text centered,
    },
    box/.style={base,
        rectangle,  
        rounded corners, 
        fill=sbblue!50!white,
    },
    condition/.style={base,
        draw=none,
        diamond,
        fill=sbgreen!50!white,
        rounded corners,
    },
    measurement/.style={base,
        trapezium, 
        trapezium left angle = 65,
        trapezium right angle = 115,
        trapezium stretches,
        fill=sbred!50!white,
        rounded corners,
    },
    arr/.style={->,
        line width=1pt,
        color=black!90!white,
        -latex,
    },
    line/.style={line width=1pt,
        color=lightgray,
    },
    label/.style={pos=0.5,
        fill=white,
        inner sep=0.5ex,
    },
}
\begin{tikzpicture}[font=\small,thick]

\node (init) [box,
    ] {Init\\
        $\ket{\psi} \leftarrow \ket{\psi_0}$
    };
\node (while) [condition,
    below=5ex of init,
    ] {while
        $
        \neg \mathcal{R}
        $
    };
\node (freshancilla) [box,
    below=7ex of while,
    ] {Fresh ancilla\\
        $
        \ket{\psi}
        \otimes
        \ket{+} 
        $
    };
\node (entangling) [box,
    below=5ex of freshancilla,
    ] {Entangling\\
        $
        \ee^{-\ii C \otimes Y}
        \ket{\psi}
        \otimes
        \ket{+} 
        $
    };
\node (midmeasure) [measurement,
    below=5ex of entangling,
    ] {Measure ancilla
    };
\node (conditionscrambling) [condition,
    below=5ex of midmeasure,
    ] {If
        $
        \mathcal{S}
        $
    };
\node (scrambling) [box,
    below=5ex of conditionscrambling,
    xshift=5em,
    ] {Scrambling\\
        $
        U_\mathrm{S}
        $
    };
\node (noscrambling) [inner sep=0pt,
    draw=none,
    at=(conditionscrambling |- scrambling),
    xshift=-4em,
    ] {};
\node (measure) [measurement,
    text width=6em,
    minimum width=5em,
    right=4em of while,
    ] {Measure main register
    };
\node (result) [box,
    fill=white,
    text width=6em,
    minimum width=5em,
    below=5ex of measure,
    ] {bitstring $\mathbf{x}$
    };

\draw[arr] (init) -- (while);
\draw[arr] (while) -- (freshancilla) node[label]{Yes};
\draw[arr] (while) -- (measure) node[label]{No};
\draw[arr] (freshancilla) -- (entangling);
\draw[arr] (entangling) -- (midmeasure);
\draw[arr] (measure) -- (result);
\draw[arr] (midmeasure) -- (conditionscrambling) node[label, pos=0.4]{$b$};
\draw[arr] (conditionscrambling) -- (scrambling) node[label]{Yes};
\draw[arr] (conditionscrambling) -- (noscrambling) node[label]{No};
\draw[arr] (scrambling) -- (noscrambling);
\draw[arr, rounded corners] (noscrambling.south)
    -- ++(-2em, 0)
    |- (while.west)
    node[label, pos=0.25]{$b$}
    ;
\end{tikzpicture}

         }
    \end{center}
    \caption{Flow chart for Algorithm~\ref{alg:fcmdqo}}\label{fig:flow-fcmdqo}
\end{figure}
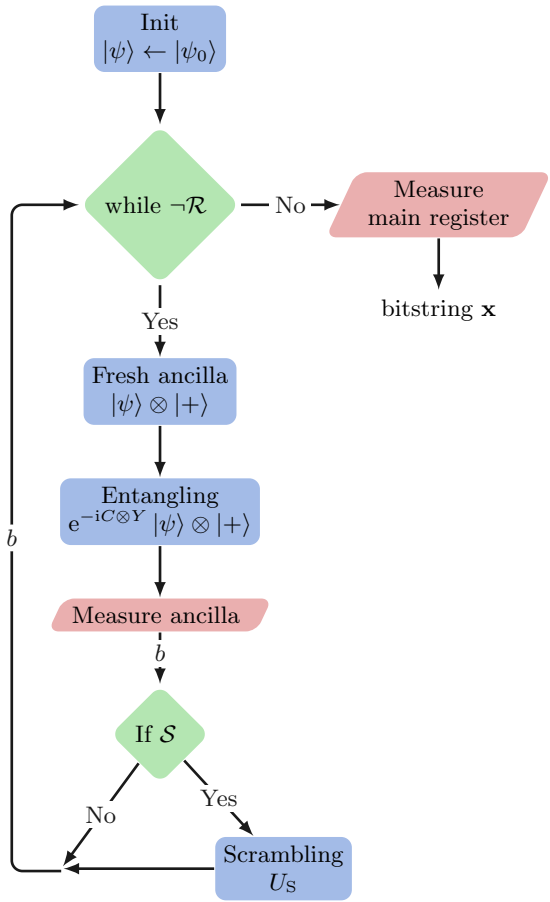
Algorithm~\ref{alg:fcmdqo} generalizes  Algorithm~\ref{alg:mdqo}. 
Similarly to Algorithm~\ref{alg:mdqo} a weak measurement operation is repeatedly applied to a main register (lines~\ref{alg:line:fcmdqo_fresh_ancilla} through~\ref{alg:line:fcmdqo_measurement}). As explained for constrained problems the initial state may be assumed to lie in the feasible subspace, or this assumption relaxed when penalty terms are included in the cost Hamiltonian.
The main extension is the 
scrambling
and count reset operations (lines~\ref{alg:line:fcmdqo_scrambling} and~\ref{alg:line:fcmdqo_reset}), which are applied whenever the scrambling condition $\mathcal{S}$ is meet.
The quantum circuit for the core part of the algorithm is depicted in Figure~\ref{fig:circuit_mdqo2}.
The return criteria $\mathcal{R}$ can be directly imported from Algorithm~\ref{alg:mdqo}, as detailed in Section~\ref{sec:mdqo_algo_details}.
Similarly, adaptive algorithm variants discussed there are directly applicable to Algorithm~\ref{alg:mdqo} also. 
We next elaborate 
on the scrambling condition and operator.
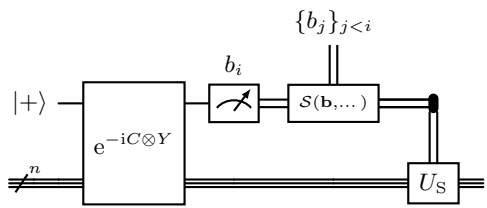
\begin{figure}[htb]
    \begin{center}
        \externalize{circuit_fcmdqo}{\begin{quantikz}[wire types={n,q,b}, column sep=1em, classical gap=0.05cm]
    &
    &
    &
    &
    & \gate[style={draw=none}]{\{b_j\}_{j<i}}
    &
    &
    &
    \\
    \setwiretype{n}
    &
    & \lstick{$\ket{+}$}
    & \gate[2]{\ee^{-\ii C \otimes Y}} \setwiretype{q}
    & \meter{b_i} 
    & \gate{\scriptstyle \mathcal{S}(\mathbf{b},\dots)} \setwiretype{c} \wire[r][1]{c} \wire[u][1]{c}
    & \ctrl[vertical wire=c]{1} 
    & \setwiretype{n}
    &
    \\
    & \qwbundle{n}
    &
    &
    &
    &
    & \gate{U_\mathrm{S}}
    &
\end{quantikz}
         }
    \end{center}
    \caption{Quantum circuit for $i$-th iteration in Algorithm~\ref{alg:fcmdqo}. The ancilla qubit (top) is prepared in a $\ket{+}$ state before it is entangled with the main register (below) and measured.
        The measurement outcome $b_i$ together with the outcomes of previous iterations $\{b_j\}_{j<i}$ informs the scrambling condition $\mathcal{S}$, which controls the scrambling operation $U_\mathrm{S}$. 
    }\label{fig:circuit_mdqo2}
\end{figure}

\subsection{Scrambling criteria} Suppose we are initially given, or that Algorithm~\ref{alg:mdqo} has reached, a state $\ket{\psi}$ that is very close to an eigenstate of $H$. Then, recalling Proposition 2 and Remark 4, subsequent successful weak measurement steps will change the state and hence improve the cost expectation very little. Moreover many successes will be needed to \enquote{escape} such a scenario, which becomes impossible if an exact eigenstate is reached. A related but distinct issue is the requirement of support on target (i.e., high cost) states to begin with, recall Remark 1. 
To alleviate these issues, a well-motivated approach is to incorporate additional operators applied to the quantum state which do not commute with the cost Hamiltonian, and so have the effect of shifting amplitude between different cost eigenstates. 
We propose to leverage the observation that this criterion is closely related to that of mixing operators in QAOA which are designed to perform a similar task  (cf.~design criteria of \cite{hadfield2019quantum}), and so can be adapted to scrambling operators. 

Mixing operators alone do not generally improve the cost expectation, and so can be applied sparingly as compared to the weak measurement steps. 
Hence we propose to apply these operator conditionally, based on a classically computed function of the measurement outcomes obtained so far, denoted by 
$\mathcal{S}(\mathbf{b}, \dots)$. 
Specifically, this function estimates the peak cost of the amplitude modulation function (cf.~Eq.~\eqref{eq:peak-position}) as well as tries to detect if the state has become stuck due to being too close to a particular eigenstate. 
When triggered the mixing operator is applied, the success and failure counts reset (i.e., the current state effectively becomes the new initial state), and the weak measurement steps resumed. 

Different instantiations of the scrambling condition function are possible. 
The implementation we primarily consider is scrambling
is triggered whenever the peak position falls below a fixed value after some minimum number~$\ell$ of weak measurement steps,
\begin{equation}
\mathcal{S}(\mathbf{b}, T, \ell)
    =\;
    (P(k_0, k_1)
    <  \mathcal{E}(T)) \wedge (k_0 + k_1 \geq \ell)  \;
    \label{eq:scrambling_criterion_default}
\end{equation}
where the threshold parameter $T$ satisfies 
$0 \leq \mathcal{E}(T) \leq \frac{\pi}{4}$,
and corresponds to a low-quality solution cutoff (cf.~\cite{mao2023measurement}). 
In practice one could always for example, select $T$ based on the uniform random guessing value, 
or the value of the best solution obtained so far, as to target Algorithm~\ref{alg:fcmdqo} to do at least that well or better
(cf.~adaptive algorithm variants as discussed in Section~\ref{sec:mdqo_algo_details}). 
When the input state $\ket{\psi}$ is 
a parameterized quantum circuit,   
as explained in Section~\ref{sec:postproc},
we often will have existing knowledge of $\langle H \rangle_\psi$ which can similarly be used to determine $T$. 

\subsection{Scrambling operators}\label{sec:fcmdqo-scrambling}
For constrained optimization, we require a scrambling operator $U_\mathrm{S}$ that mixes between (all) feasible states,  
\begin{equation}
    \bra{x} U_\mathrm{S} \ket{x'} \neq 0 \quad \forall x, x' \text{ feasible bitstrings},
    \label{eq:condition_scambling_operator}
\end{equation}
which reduces to all possible bitstrings in the unconstrained case.
This criterion guarantees that if the quantum state lies in the feasible subspace before the application of the scrambling operator, then it is guaranteed to retain the property afterwards. As the weak measurement operator also preserves feasibilty, together this ensures that Algorithm~\ref{alg:fcmdqo} only returns feasible solution samples.  
We emphasize that operators satisfying this criterion are not unique and come with different tradeoffs~\cite{hadfield2019quantum}. 
While in general we are not restricted to scrambling operators imported from QAOA, these operators provide an immediate library for conveniently applying Algorithm~\ref{alg:fcmdqo} to a variety of problems. 

Concretely, for example, for the unconstrained MaxCut problem on a graph $G=(V,E)$ we could choose 
$U_\mathrm{S} = \prod_{u \in V} \exp\left(- \ii \chi X_u \right)$ for some parameter $\chi \in [0, \pi]$. 
For the constrained MIS problem one could use the same mixer if penalty terms are included in the cost Hamiltonian, or this can be avoided if we restrict to the feasible subspace and chose a mixer such as 
$U_\mathrm{S} = \prod_{u \in V} \Lambda_{\prod_{v \in \text{nbhd(u)}} \neg x_v} \exp\left(- \ii \chi X_v \right)$. Here the latter operator denotes a product of a coherently controlled unitary gates for each vertex, controlled by the states of its neighbors in the problem graph; see~\cite{hadfield2019quantum} for details. 
In the simplest implementation of our protocol a fixed $\chi$ can be used that satisfies the above criterion. For more sophisticated implementations $\chi$ could be optimized as a parameter, or itself determined as a function of the obtained measurement outcomes. 

\paragraph*{Mixing operators for initial state generation:}
For many constrained problems the construction of superpositions of feasible states, as needed for initial states in our approach, is not trivial.
However, similar to the approach proposed in~\cite{hadfield2019quantum}, given a single feasible basis state, a mixing operator as above can be applied to generate such a superposition.
For example in the MIS case, we can always use the zero bitstring $\ket{00..0}$ and apply the mixing operator described in the example above. In this initial application a larger $\chi$ parameter may be suitable than what is used for the scrambling operator steps.

\paragraph*{Resource estimation:}
We emphasize that the classical control of the scrambling operators in Algorithm~\ref{alg:fcmdqo} adds no additional quantum resource overhead. As the resources required for the mixing operators we import are often commensurate or less those required for QAOA phase operators~\cite{hadfield2019quantum}, the discussion of Sec.~\ref{sec:resource_est} leads to a similar conclusion compared to the weak measurement operation. As explained, for a given problem different scrambling operators are possible, with different overheads, as likely different performance tradeoffs. 

\subsection{Numerical Proof-of-Concept}
In order to demonstrate the effectiveness including of the scrambling operation,
here we consider a contrived example, where many weak-measurement steps in Algorithm~\ref{alg:mdqo} with undesired outcomes have resulted in a state stuck far away from the target subspace.
In particular we consider the state corresponding to outcomes $k_0=160, k_1=50$ 
for the MaxCut problem considered above in Section~\ref{sec:numerics_mdqo} (see markers in Figure~\ref{fig:results_mdqo}).
Next we simulate a single application of a scrambling operation $U_\mathcal{S}(\chi)=\prod_{u \in V} \ee^{-\ii \chi X_u}$,
followed by another round of weak-measurement steps with outcomes $\tilde{k}_0, \tilde{k}_1$.
Hence the state is equal (up to normalization) to
\begin{equation}
    W_0^{\tilde{k}_0}
    W_1^{\tilde{k}_1}
    U_\mathcal{S}(\chi)
    W_0^{k_0}
    W_1^{k_1}
    \ket{\psi_0} \,.
    \label{eq:state_after_scambling}
\end{equation}
\begin{figure}[htb]
    \begin{center}
        \externalize{results_fcmdqo}{\pgfplotstableread[col sep=comma]{./figures/tikz/data/results_fdmdqo_vary_chi_exp_value_scrambling.csv}{\varychiscrambling}
\pgfplotstableread[col sep=comma]{./figures/tikz/data/results_fdmdqo_vary_chi_exp_value_noscrambling.csv}{\varychinoscrambling}
\pgfplotstableread[col sep=comma]{./figures/tikz/data/results_fdmdqo_exp_value_scrambling.csv}{\scrambling}
\pgfplotstableread[col sep=comma]{./figures/tikz/data/results_fdmdqo_exp_value_noscrambling.csv}{\noscrambling}

\pgfplotstablegetcolsof{\varychiscrambling}
\pgfmathsetmacro{\numcolstop}{\pgfplotsretval}
\pgfplotstablegetcolsof{\scrambling}
\pgfmathsetmacro{\numcolsbottom}{\pgfplotsretval}

\pgfmathsetmacro{\lastycoltop}{\numcolstop - 1}
\pgfmathsetmacro{\lastycolbottom}{\numcolsbottom - 1}

\pgfplotscreateplotcyclelist{color list bottom}{
  {sbblue, solid, thick},
  {sbred, solid, thick},
  {sbgreen, solid, thick},
  {sborange, solid, thick},
{sbblue, dashed, thick},
  {sbred, dashed, thick},
  {sbgreen, dashed, thick},
  {sborange, dashed, thick},
}
\pgfplotscreateplotcyclelist{color list top}{
  {sbbrown, solid, thick},
  {sbgray, solid, thick},
  {sbblue, solid, thick},
  {sbmagenta, solid, thick},
  {sbpurple, solid, thick},
  {black, dashed, thick},
}
\begin{tikzpicture}[trim axis left]
    \begin{groupplot}[
        group style={group size=1 by 2, 
vertical sep=8ex,
            ylabels at=edge left,
        },
    ]
    \nextgroupplot[width=8cm,
        height=5cm,
tick style={draw=none}, ymax=5.1,
        ytick={0, 1, 2, 3, 4, 5},
        enlarge x limits=0.05,
        enlarge y limits=0.05,
        xlabel={$\tilde{k}_1$},
        ylabel={$\langle H \rangle$},
        legend style={draw=none,
opacity=0.8,
            inner sep=1pt,
            legend columns=3,
            column sep=0.1ex,
            at={(0.59,0.22)},
            anchor=south east,
        },
        legend image post style={xscale=0.5}, grid=both,
        cycle list name=color list top,
        axis background/.style={fill=sbbackground},
        axis line style={draw=none},
        grid style={color=white},
    ]

    \addlegendimage{empty legend}
\addlegendentry[]{$\tilde{\chi}$}
    Loop through all Y-columns (index 1 to lastycoltop)
    \foreach \i in {1,...,\lastycoltop} {

\pgfplotstablegetcolumnnamebyindex{\i}\of{\varychiscrambling}\to\pgfplotsretval
\addplot table [
            x index=0,
            y index=\i,
        ] {\varychiscrambling};
        \addlegendentryexpanded{\pgfplotsretval}
    }
    \foreach \i in {1,...,1} {

\pgfplotstablegetcolumnnamebyindex{\i}\of{\varychinoscrambling}\to\pgfplotsretval
\addplot table [
            x index=0,
            y index=\i,
            dashed,
        ] {\varychinoscrambling};
    }
    \node[font=\footnotesize,
] at (axis cs:60,2.2) {without scrambling};
    \nextgroupplot[width=8cm,
        height=5cm,
tick style={draw=none}, ymax=5.1,
        ytick={0, 1, 2, 3, 4, 5},
        enlarge x limits=0.05,
        enlarge y limits=0.05,
        xlabel={$\tilde{L}=\tilde{k}_1 - \tilde{k}_0$},
        ylabel={$\langle H \rangle$},
        legend style={draw=none,
opacity=0.8,
            inner sep=1pt,
            legend columns=3,
            column sep=0.1ex,
            at={(0.99,0.05)},
            anchor=south east,
        },
        legend image post style={xscale=0.5}, grid=both,
        cycle list name=color list bottom,
        axis background/.style={fill=sbbackground},
        axis line style={draw=none},
        grid style={color=white},
    ]

    \addlegendimage{empty legend}
    \addlegendentry[]{$\tilde{k}_0$}
\foreach \i in {1,...,\lastycolbottom} {

\pgfplotstablegetcolumnnamebyindex{\i}\of{\scrambling}\to\pgfplotsretval
\addplot table [
            x index=0,
            y index=\i,
        ] {\scrambling};
        \addlegendentryexpanded{\pgfplotsretval}
    }
    \foreach \i in {1,...,\lastycolbottom} {

\pgfplotstablegetcolumnnamebyindex{\i}\of{\noscrambling}\to\pgfplotsretval
\addplot table [
            x index=0,
            y index=\i,
            dashed,
        ] {\noscrambling};
    }
    \end{groupplot}
\end{tikzpicture}

         }
    \end{center}
    \caption{Illustration of the efficacy of the scrambling operation:
        Expectation value of $\langle H \rangle$ of the MaxCut Hamiltonian~\eqref{eq:max_cut_hamiltonian} for graph displayed in Figure~\ref{fig:results_mdqo} with respect to the state in \eqref{eq:state_after_scambling} (with $k_0=50, k_1=160$) against the difference in desired outcomes and undesired outcomes $\tilde{L}=\tilde{k}_1-\tilde{k}_0$ 
        for various scrambling parameters $\chi=\frac{\tilde{\chi}}{7}\frac{\pi}{4}$ with fixed $\tilde{k_0}=0$ (top) and various fixed numbers of undesired outcomes $\tilde{k}_0$ and fixed $\tilde{\chi}=3$ (bottom). 
        In both plots the dashed lines indicate the absence of scrambling (i.e.~$\chi=0$).
        The tight bound $B=h^*$ was used to determine the cost Hamiltonian rescaling parameters.
        }
    \label{fig:results_fcmdqo}
\end{figure}
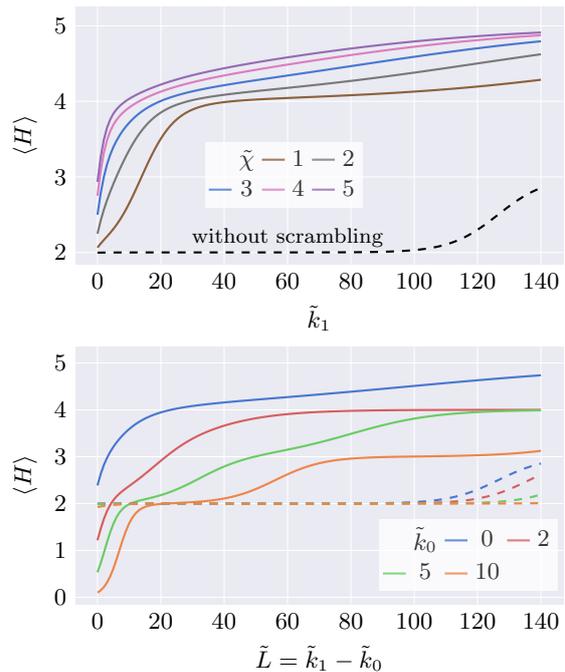
They 
demonstrate that the application of the scrambling operation indeed helps with 
\enquote{escaping} from an otherwise stuck situation.
Additionally, the precise choice of the scrambling parameter $\chi$ is seen to have relative little qualitative effect.
 
\section{Outlook} 
\label{sec:conclusion}
\subsection{NISQ era considerations} As explained, low depth applications of Algorithm~\ref{alg:mdqo} are particularly attractive for near-term quantum devices, i.e., so-called Noisy Intermediate Scale Quantum  (NISQ) devices~\cite{preskill2018quantum}. 
Indeed, the simplest application of Algorithm~\ref{alg:mdqo} requires only a single additional circuit layer and measurement of a single ancilla, with fast feedback not necessary. 

Furthermore, in the approximate optimization setting especially, there is no strict requirement that the weak measurement step is implemented exactly, or even to high precision. Indeed we just require effective operations that (when 'successful')  drive the system register toward an improved energy distribution. 
As we will repeatedly prepare and measure, we ultimately care about the support of the resulting probability distributions, rather than close fidelity to some particular quantum state.
Hence it seems reasonable to conjecture that weak-measurement based quantum optimization may be in some ways more resilient to hardware noise and implementation error than other approaches. 
At the same time, for some types of system noise it should be further advantageous to integrate Algorithms~\ref{alg:mdqo} and~\ref{alg:fcmdqo} as subroutines within a still broader scheme such as Noise-Directed Adaptive Remapping~\cite{maciejewski2024improving} which favorably adapts the algorithm output distribution to the device bias in an iterative manner. 
We save in-depth exploration of the effects of noise and how to deal with it to future work. 

\subsection{Coherent Weak-Measurement Optimization}
Finally, we mention the possibility of developing fully coherent versions of our algorithms, designed to be run at larger depths on fault-tolerant quantum devices of the future. 
Here, instead of a sequence of $r$ weak measurements, $r$ logical ancilla qubits are entangled in turn with the system register and
used to control applications of the scrambling operator. 
The ancilla register is input to quantum subroutine that 
coherently calculates 
scrambling criterion $\mathcal{S}$
such as Equation~\eqref{eq:scrambling_criterion_default}. A sketch of a single step of the such an algorithm is depicted in Figure~\ref{fig:circuit_coherent_mdqo}. 

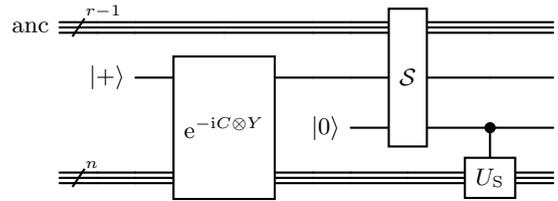
\begin{figure}[htb]
    \begin{center}
        \externalize{circuit_coherent_mdqo}{\begin{quantikz}[wire types={b,q,q,b}, row sep=1ex, classical gap=0.07cm]
    \lstick{\text{anc}}
    & \qwbundle{r - 1}
    &
    &
    & 
    &
    & \gate[3]{\mathcal{S}}
    & 
    & 
    \\
    \setwiretype{n}
    &
    & \lstick{$\ket{+}$}
    & \gate[3]{\ee^{-\ii C \otimes Y}} \setwiretype{q} 
    &
    & 
    & 
    & 
    & 
    \\
    \setwiretype{n}
    &
    &
    &
    &
    & \lstick{$\ket{0}$}
    &
    \setwiretype{q}
    & \ctrl{1} 
    &
    \\
    & \qwbundle{n}
    &
    &
    & 
    &
    & 
    & \gate{U_\mathrm{S}}
    &
\end{quantikz}
         }
    \end{center}
    \caption{Quantum circuit for a single step in the coherent version of Algorithm~\ref{alg:fcmdqo}.  
        For each iteration an ancilla qubit (middle) is needed as before which coherently captures the possible weak measurement outcomes. The ancilla qubits of previous steps (top register) are used for 
            calculating the scrambling criterion $\mathcal{S}$ that 
coherently controls application of the scrambling operation $U_\mathrm{S}$. For the implementation of $\mathcal{S}$ additional scratchpad qubits may be used.
    }\label{fig:circuit_coherent_mdqo}
\end{figure}
Hence one could then in principle embed our algorithms with frameworks such as quantum amplitude amplification to achieve a target configuration $(k_0,k_1)$ quadratically faster than the number of repeated samples requires for Algorithms~\ref{alg:mdqo} and~\ref{alg:fcmdqo}, asymptotically, using an ancilla register of $r=k_0+k_1$ qubits. 

Furthermore this perspective may facilitate 
closer ties and complementary analysis 
of our approach with some of the closely related coherent algorithms discussed in Section~\ref{sec:relatedApproaches}. 
We save the precise details 
for future work.

\subsection{Conclusion} In this work we have developed a 
measurement-driven approach to quantum optimization
based on iterated weak measurements and classical feedback.  
We have shown how to exploit properties unique to combinatorial optimization to rescale and shift the problem as to guarantee good probability of success. 
Furthermore we have explained how our paradigm is applicable to arbitrary initial states, including those produced by other parameterized quantum circuits, as well 
how it naturall accommodates constrained optimization problems. Hence our approach yields both standalone quantum algorithms, as well as \textit{quantum postprocessing}  applicable to a variety of existing quantum circuits. Furthermore, our measurement-driven algorithm may be embedded as subroutines within hybrid solvers such as Iterative Quantum Optimization Algorithms~\cite{brady2023iterative,brady2025quantum}.

As quantum hardware continues to advance, we look forward to experimenting with and further refining our protocols on real-world problems and quantum devices. Indeed, iterative development and refinement guided by extensive empirical testing has been instrumental in the advancement of numerous successful classical heuristics and algorithms. 
As mentioned, the general drive toward fault-tolerant quantum devices has been especially motivating towards the
advancement of 
technologies enabling 
mid-circuit measurement as well as fast feedback control, 
in particular quantum error correction, mitigation, and benchmarking, see e.g.~\cite{chiaverini2004realization,botelho2022error,singh2023mid,koh2024readout,bar2025layered,govia2023randomized,hothem2025measuring}.
A natural question is whether further hardware advancements and features, across all layers of the computational stack, can be similarly leveraged to design the most effective hybrid protocols for combinatorial optimization and beyond. 
 
\section*{Acknowledgements}
We thank our present and past colleagues from NASA QuAIL and Jülich Research Centre and collaborators for numerous helpful related discussions, in particular Nischay Suri, Reuben Tate, 
and Lena Wagner.  
This material is based upon work supported by the U.S. Department of Energy, Office of Science, Office of Advanced Scientific Computing Research under Award Number 89243024SSC000129. We are grateful for support from the NASA Ames Research
Center. 
SH was supported under the Prime Contract No.~80ARC020D0010. TS was funded by the German Federal Ministry of Research, Technology and Space (BMFTR) in the project quantum artificial intelligence for the automotive value chain (QAIAC), Funding No.~13N17166.
 \bibliographystyle{myREVTeX4-2}
\bibliography{references}

\appendix
\section{Supporting Analysis} \label{app:analysis}
\subsection{Useful identities} \label{app:trig}
We make 
use of the following trigonometric identities, in particular in deriving the results of Sec. 2:
\begin{eqnarray}
\sin^2(x)&=&\frac12 -\frac12 \cos(2x)\\
    \sin(a\pm b) &=& \sin a \cos b \pm \cos a \sin b\\
    \cos(a\pm b) &=& \cos a \cos b \mp \sin a \sin b\\
    \cos(x) \pm \sin(x) &=&\sqrt{2} \sin\left(\frac{\pi}{4} \pm x\right)
    \label{eq:trig_ident_cos_minus_sin}\\
    1 \pm \sin(2x) &=& 2 \sin^2\left(\frac{\pi}{4} \pm x\right)
    \label{eq:trig_ident_sin_squared}\\
    \sin(\frac\pi2+x) &=& \cos x\\
\end{eqnarray}
Finally, on $[0,\pi/2]$ we 
have 
$\frac2\pi x \leq \sin(x) \leq x.$
 \subsection{Multi-step algorithm with reset}
Here we give the proof details of Lemma \ref{lem:algWithReset}.

\begin{proof}[Proof of Lemma \ref{lem:algWithReset}]
    First consider the case of fixed probability of success $p$, and define $q:=1-p$.
    Let $E_j$ be the expected number of additional flips to obtain $L$ more $1$s than $0$s from the point $j=k_1-k_0$.
    These quantities are easily seen to satisfy the recurrence relation $E_j = 1 + pE_{j+1} + qE_{j-1}$, 
    for which $E_0$ gives the expected number of trials overall, which we obtain by applying appropriate boundary conditions.

    The target of achieving $L$ more $1$s than $0$s corresponds to the absorbing boundary condition $E_L = 0$. 
    Here if $k_1 - k_0 = R$ the algorithm resets the count and quantum state. 
    This corresponds to the boundary condition $E_{-R} = E_0$. 
    Thus, 
    we have the system of finite difference equations
    $$E_L = 0  $$
    $$E_j= 1+ pE_{j+1} + qE_{j-1}   \;\;\;\; \text{for } j = -R+1, \dots, L-1 $$
    $$E_{-R} = E_0  $$
    which we seek to solve for $E_0$.
    We rearrange the recurrence relation to 
    $E_j - pE_{j+1} - qE_{j-1} = 1 $ 
    for which it is straightforward to derive the homogeneous solutions 
    $E_j = A + B(q/p)^j$, 
    and particular solution $E_j = - \frac{j}{p-q}$ such that the general solution is given by 
    $$E_j = A + B\left(\frac{q}{p}\right)^j - \frac{j}{p-q}, $$
    with coefficients then satisfying $E_0=A+B$. Solving for $A,B$ using the boundary condition yields $E_0$ to be given by the 
    righthand side of Eq.~\eqref{eq:trials_w_reset}. 
    As in Lemma~\ref{lem:expecRep}, as the success probability at each step is $p$ or better we have the bound as claimed. 
    Finally, it is easy to check $k_{L,R=1} \leq k_L \leq k_{L,R-\rightarrow\infty} = \frac{L}{2p-1}$.
\end{proof}
 \subsection{Improvement from reducing $\varepsilon$}
Here we derive Equation~\eqref{eq:dHde} and surrounding claims;
that, counterintuitively, reducing $\varepsilon$ from the one derived from the optimal cost value can sometimes improve the expected cost. In practice we do not know the true optimal value and must use a bound $B> h^*$, which results in a smaller $\varepsilon$ parameter, but nevertheless this can result, surprisingly, in improved performance.
The tradeoff with the improved expected cost is that the single-step success probability decreases with smaller $\varepsilon$. 

For simplicity of arguments assume here that $H\geq 0$ such that we have selected $\alpha=0$ and $C=\varepsilon H$, with $\varepsilon < \varepsilon^*:=\frac{\pi}{4h^*}$, and consider 
the weak measurement of Algorithm 1 applied to a given normalized state $\ket{\psi}$, with a successful outcome yielding the state $\ket{\phi}$. 

From the analysis of Sec.~\ref{sec:mdqo} we may write the single-step success probability as 
$ p_1 (\varepsilon) = \langle \sin^2(\pi/4 + \varepsilon H ) \rangle_\psi,$
and (normalized) cost expectation $\langle H \rangle_{\phi} (\varepsilon) = Q/p_1$ with $Q:= \langle H \sin^2(\pi/4 + \varepsilon H )\rangle_\psi$. 
Then differentiating 
with respect to $\varepsilon$ we have
$$ p'_1:=\frac{d}{d\varepsilon} p_1 = \langle H\sin (\pi/2 + 2\varepsilon H)\rangle_\psi = \langle H \cos 2\varepsilon H\rangle_\psi$$
which implies $p_1'\geq 0$ under the assumptions on~$H,\varepsilon$. 
This shows that slightly decreasing $\varepsilon$ will always have the effect of slightly decreasing the single-step success probability. Nevertheless, as discussed, and demonstrated Figure~\ref{fig:results_mdqo}, 
the compounded effect of using suboptimal bounds can result in increased or decreased single step success probability relative to the $\varepsilon$ corresponding to the optimal cost. 

Similarly, applying the quotient rule gives 
$$ 
\langle H \rangle'_{\phi} = \frac{ Q'}{p_1}-\frac{ Qp_1'}{p_1^2} 
=\frac1{p_1}(Q'-\langle H\rangle_{\phi}\;p'_1)$$
where
$Q'=\langle H^2 \cos2\varepsilon H\rangle_\psi.$ 
Evaluated at $\varepsilon \rightarrow 0^+$ this becomes 
$$\langle H\rangle_{\phi}'(0)=\frac12 (\langle H^2 \rangle_\psi -  \langle H\rangle_\psi^2) =: \frac{\sigma_H^2}2 \geq 0 $$
as claimed.  

Now consider $\langle H\rangle'_{\phi}(\varepsilon^*)$. This quantity is $\leq 0$ if and only if $Q'\leq  \langle H\rangle_{\phi}p'_1$, or equivalently (by definition) if 
$\text{Cov}(H,\cos(2\varepsilon H)) \leq 0$, 
which is often true though not always guaranteed; roughly speaking, counterexample are those where the cost function grows faster than the function $\cos(2\varepsilon h(x))$ goes to zero as $h\rightarrow h^*$.
Indeed, a sufficient condition is that the function $h(x)\cos(2\varepsilon h(x))$ is nonincreasing. 

For such cases where this holds such that $\langle H\rangle'_{\phi}(\varepsilon^*)<0$, the condition $\langle H\rangle'_{\phi}(0)> 0$ implies there must exist an optimal $\varepsilon \in (0,\varepsilon^*))$, i.e. there exists a positive $\varepsilon<\varepsilon^*$ such that $\langle 
H\rangle_{\phi}(\varepsilon) \geq \langle H\rangle_{\phi}(\varepsilon^*)  $ as claimed. When this condition fails then optimal value is $\varepsilon^*$. 

As should be evident, the exact location of the optimal $\varepsilon$ depends on the particular state $\ket{\psi}$ and cost Hamiltonian $H$. 
It can be obtained by solving $Q'(\varepsilon)=\langle H\rangle_\phi (\varepsilon) p'_1$ for $\varepsilon$; we do not attempt to derive further expressions here. In practice as explained one may search over or variationally optimize the value of $\varepsilon$. For multiple steps the state and hence the optimal $\varepsilon$ changes over each step which further motivates the adaptive approaches discussed in the text.  
  
\end{document}